\newtheorem{theorem}{Theorem}[section]
\newtheorem{definition}[theorem]{Definition}
\newtheorem{proposition}[theorem]{Proposition}
\newtheorem{lemma}[theorem]{Lemma}
\newtheorem{corollary}[theorem]{Corollary}
\newtheorem{remark}[theorem]{Remark}
\newtheorem{fact}[theorem]{Fact}
\def\Supp{\textnormal{Supp }}
\def\Im{\textnormal{Im }}
\begin{document}

\title{The Impossibility of Efficient Quantum Weak Coin-Flipping}

\author{Carl A.~Miller}

\affil{\small Joint Center for Quantum Information and Computer Science \\  University of Maryland, College Park, MD 20742, USA} \affil{\small National Institute of Standards and Technology, \\ 100 Bureau Dr., Gaithersburg, MD 20899, USA}

\date{}

\maketitle

\abstract{How can two parties with competing interests carry out a fair coin flip, using only a  noiseless quantum channel? This problem (quantum weak coin-flipping) was formalized more than 15 years ago, and, despite some phenomenal theoretical progress, practical quantum coin-flipping protocols with vanishing bias have proved hard to find.  
In the current work we show that there is a reason that practical weak quantum coin-flipping is difficult: any quantum weak coin-flipping protocol with bias $\epsilon$ must use at least $\exp ( \Omega (1/\sqrt{\epsilon} ))$ rounds of communication.  This is a large improvement over the previous best known lower bound of $\Omega ( \log \log (1/\epsilon ))$ due to Ambainis from 2004.  Our proof is based on a theoretical construction (the two-variable profile function) which may find further applications.
}

%\tableofcontents

\section{Introduction}

Suppose that Alice and Bob are two cooperating but mutually mistrustful parties, and they must  make a unified decision between two choices ($X$ and $Y$).  Alice wants choice $X$, and Bob wants choice $Y$.  However, neither of them will gain if they do not agree on their decision.
How can the decision be made fairly?
A natural solution would be for Alice and Bob to have a trusted third party (Charlie) flip a coin and report the result to both Alice and Bob.  But, can coin-flipping be done in absence of any trusted third party or common source of randomness?  

In this paper we will be concerned with the question of whether coin-flipping can be done if Alice and Bob share a two-way noiseless quantum channel.  A standard way to model a protocol in this scenario is like so (see Figure~\ref{fig:protocol}).  Let $n$ be an even positive integer.
\begin{enumerate}
    \item \label{firstassump} Alice possesses a quantum system $\mathcal{A}$, which she controls, and Bob possesses a quantum system $\mathcal{B}$ which he controls.
    
    \item There is an additional quantum system $\mathcal{M}$, initially possessed by Alice, which stores quantum messages exchanged by Alice and Bob during the protocol.
    
    \item If $i$ is odd, then on the $i$th round of the protocol, Alice performs a prescribed joint quantum operation on $\mathcal{A}$ and $\mathcal{M}$, and then sends $\mathcal{M}$ across the quantum channel to Bob.
    
    \item If $i$ is even, then on the $i$th round of the protocol, Bob performs a prescribed joint quantum operation on $\mathcal{B}$ and $\mathcal{M}$ and then sends $\mathcal{M}$ across the channel to Alice.
    
    \item \label{lastassump} After the $n$th round of communication, Alice performs a binary measurement on $\mathcal{A}$ and reports the result as a bit ($a$), and Bob performs a binary measurement on $\mathcal{B}$ and reports the result as a bit ($b$).
\end{enumerate}
\begin{figure}
    \centering
    \includegraphics[scale=0.25]{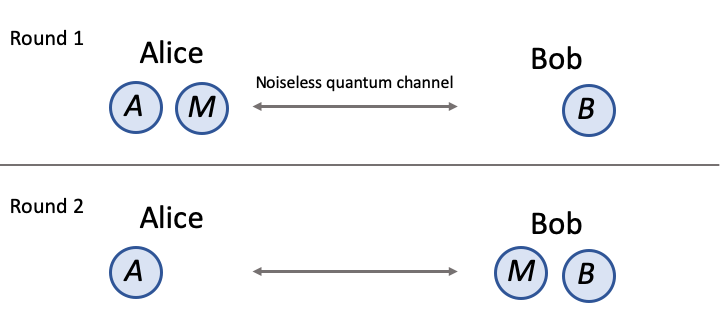}
    \caption{The first two rounds of a weak quantum coin-flipping protocol.}
    \label{fig:protocol}
\end{figure}
We suppose that Alice desires the outcome $a = b = 0$, and that Bob desires the outcome $a=b=1$.
It is presumed that an ``honest'' party will carry out their operations and measurements exactly as prescribed; however, a dishonest party may perform arbitrary manipulations of the quantum systems that they possess, and may perform any final measurement that they choose at the end of the protocol.  (In particular, there is no restriction on the computational power of Alice or Bob.)

We say such a protocol is a \textbf{weak coin-flipping protocol with bias $\epsilon$} if the following hold:
\begin{enumerate}
    \item If Alice and Bob both perform honestly, then $\mathbb{P} ( a = b = 0)$ is exactly $\frac{1}{2}$ and $\mathbb{P} ( a = b = 1 )$ is exactly $\frac{1}{2}$.
    
    \item If Alice behaves dishonestly and Bob behaves honestly, then $\mathbb{P} ( b = 0) \leq \frac{1}{2} + \epsilon$.
    
    \item If Bob behaves dishonestly and Alice behaves honestly, then $\mathbb{P} ( a = 1 ) \leq \frac{1}{2} + \epsilon$.
\end{enumerate}
These conditions assert that Alice cannot bias the outcome by more than $\epsilon$ in her favor, and Bob cannot bias the result by more than $\epsilon$ in his favor.
(A \textbf{strong coin-flipping protocol with bias $\epsilon$} is one which guarantees that a dishonest party cannot bias the result of the coin flip by more than $\epsilon$ in either direction.  Strong coin-flipping with vanishing bias is impossible by an elementary argument attributed to A.~Kitaev --- see  \cite{Landau:2014}.)

We note that in a classical analogue of the setting described above --- i.e., where $\mathcal{A, B, M}$ are random variables, Alice's and Bob's operations are stochastic maps, and the two parties are computationally unlimited --- it is elementary to show that any protocol allows either Alice to force the outcome to always be $0$, or Bob to force the outcome to always be $1$. (Thus, achieving bias less than $\frac{1}{2}$ is impossible.) This fact can be overcome by putting computational restrictions on Alice and Bob, and classical coin-flipping is itself an elegant and extensively studied topic (see \cite{Blum:1983, Moran:2016}).   The main motivation to study quantum coin-flipping is that, in contrast to classical coin-flipping, it allows security proofs based on physical assumptions only.

Quantum coin-flipping was formalized as early as 1998 (\cite{Lo:1998}), and a series of works proved coin-flipping protocols with progressive improvements in the bias. Aharonov et al.~\cite{Aharonov:2000} proved bias $0.42$. Spekkens and Rudolph, and independently Ambainis, proved successive results \cite{Spekkens:2001,Spekkens:2002, Ambainis:2004} which brought the bias down to $\frac{\sqrt{2}-1}{2} \approx 0.207$.  These results involved a small constant number of rounds of communication.  Mochon \cite{Mochon:2004,Mochon:2005} then introduced a family of quantum weak coin-flipping protocols which approach bias $1/6 \approx 0.166$, with the number of communication rounds tending to infinity.

Finally, in a landmark work in 2007, Mochon \cite{Mochon:2007} showed the existence of a family of weak coin-flipping protocols with bias tending to zero.  
Mochon exploited the idea of \textit{point games} (a concept also attributed to A.~Kitaev) to achieve this result.
Mochon's existence proof  was later simplified, re-written and published by Aharonov et al.~\cite{Aharonov:2016}.
Then, in recent work \cite{Arora:2019}, Arora et al.~introduced an algorithm which effectively constructs the protocols in the family whose existence was proven by Mochon.  

Following this phenomenal progress, at least one major loose end remains.  The number of communication rounds used in the protocols in \cite{Mochon:2007, Aharonov:2016} was only shown to be $ (1/\epsilon)^{O ( 1/\epsilon )}$.  This asymptotic quantity is hardly efficient or practical.  Meanwhile, the best known \textit{lower} bound on the number of communication rounds \cite{Ambainis:2004} is $\Omega ( \log \log (1 / \epsilon ))$, leaving a vast range of uncertainty about the optimal resources needed to achieve vanishing bias.   

How many rounds of quantum communication are needed to achieve a particular bias $\epsilon$?  For example, in 
a different setting (strong classical coin-flipping with computational hardness assumptions), Cleve \cite{Cleve:1986} showed that the number of communication rounds is $\Omega ( 1 / \epsilon )$, and this bound was shown to be achievable \cite{Moran:2016}.
Could a similar relationship exist for quantum weak coin-flipping?

\subsection{Summary of result}

In the current paper, we prove the following lower bound on the number of communication rounds for quantum weak coin-flipping (see Theorem~\ref{thm:final}):
\begin{theorem}
\label{thm:finalquote}
Let $\mathbf{C}$ be an $n$-round quantum weak coin-flipping protocol with bias $\epsilon$.  Then,
\begin{eqnarray}
n & \geq & \exp \left( \Omega \left( \frac{1}{\sqrt{\epsilon}} \right) \right).
\end{eqnarray}
\end{theorem}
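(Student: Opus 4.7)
The plan is to work within the point-game formalism of Kitaev and Mochon, which converts any $n$-round quantum weak coin-flipping protocol with bias $\epsilon$ into a combinatorial object: a sequence of finitely-supported probability distributions $p_0, p_1, \ldots, p_N$ on $\mathbb{R}_{\geq 0}^2$ with $N$ at most a constant multiple of $n$, beginning at the honest distribution $p_0 = \frac{1}{2}\delta_{(0,1)} + \frac{1}{2}\delta_{(1,0)}$, ending at a distribution $p_N$ whose barycenter lies coordinatewise at most $(\frac{1}{2}+\epsilon, \frac{1}{2}+\epsilon)$, and where each transition $p_{i-1}\to p_i$ is a horizontal or vertical ``split'' obeying a rigid Kitaev--Mochon admissibility condition on a single coordinate. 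Proving the theorem therefore reduces to lower-bounding $N$.

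The central device is a real-valued two-variable profile function $\Phi$ on such distributions, which I would design to satisfy two complementary estimates simultaneously. First, a large boundary-value gap $|\Phi(p_N) - \Phi(p_0)| \geq \Omega(1/\sqrt{\epsilon})$, forced by the contrast between $p_0$ sitting at the two axis corners and $p_N$ having barycenter near the diagonal. Second, a delicate per-step bound on $|\Phi(p_i) - \Phi(p_{i-1})|$ whose accumulated total across $N$ admissible splits is $O(\log N)$. Together these yield $\log N \geq \Omega(1/\sqrt{\epsilon})$, i.e., $N \geq \exp(\Omega(1/\sqrt{\epsilon}))$, which is the target bound.

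I expect $\Phi$ to take the integral form $\Phi(p) = \int f(x,y)\, dp(x,y)$ for a kernel $f:\mathbb{R}_{\geq 0}^2 \to \mathbb{R}$ that is \emph{almost} biconcave---concave in each coordinate up to a small, carefully-calibrated slack. A fully biconcave kernel gives a genuinely monotone $\Phi$ under splits, which is easy to track but can be shown to admit only a polynomial-in-$1/\epsilon$ boundary gap; injecting just the right amount of non-biconcavity is what unlocks the $\Omega(1/\sqrt{\epsilon})$-sized gap while keeping the accumulated per-step slack logarithmic in $N$. Logarithmic or iterated-logarithmic weighting of the two coordinates is a natural candidate for $f$.

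The main obstacle is the construction of $f$. The Kitaev--Mochon admissibility condition is rigid, and a single admissible split can redistribute mass across the entire range of one coordinate axis, so the per-move bound must be verified uniformly over all such rearrangements. The delicate task is to engineer the one-variable concavity defects of $f$ so that they aggregate, across every admissible split that can ever occur, to a total of $O(\log N)$, while simultaneously ensuring that $f$'s values at the axis corners of $p_0$ differ from its values near the barycenter of any admissible $p_N$ by at least $\Omega(1/\sqrt{\epsilon})$. Balancing these two antagonistic demands, and making the per-move estimate quantitative enough to integrate to the requisite $O(\log N)$ rather than the trivial $O(N)$, is where I anticipate the bulk of the technical effort will lie.
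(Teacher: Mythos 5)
Your high-level plan shares a few motifs with the paper---a ``two-variable profile function,'' a reduction to Kitaev--Mochon point games, and the target bound $\exp(\Omega(1/\sqrt{\epsilon}))$---but the mechanism you sketch is genuinely different from what the paper does, and the critical step in your version does not appear to go through.

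First, a structural mismatch. You propose to track a scalar potential $\Phi(p) = \int f\, dp$ across the $N$ steps of a \emph{time-dependent} point game and to argue $\sum_i |\Phi(p_i)-\Phi(p_{i-1})| = O(\log N)$. The paper instead collapses the time-dependent game to a \emph{time-independent} point game $(r_1,r_2)$---the odd and even moves summed separately---and uses the easy bound $\|r_1\|_1 + \|r_2\|_1 \leq 2n$ (each intermediate configuration has $1$-norm $1$, so each move has $1$-norm at most $2$). The heavy lifting is then to prove a \emph{lower} bound $\|r_1\|_1 + \|r_2\|_1 \geq \exp(\Omega(\epsilon^{-1/2}))$ for any valid TIPG achieving the biased-coin move. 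There is no ``accumulated per-step slack is $O(\log N)$'' in the paper, and I do not see how you would obtain such a bound: for a scalar potential of the integral form you describe, bounding each per-step change by a constant gives only $O(N)$, and there is no obvious reason the changes should shrink geometrically along the game. This is the central step you flag as ``the bulk of the technical effort,'' and it is exactly the step that, as stated, seems unlikely to work.

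Second, the object the paper calls a ``profile'' is not a scalar functional of a distribution; it is a function $\widehat{q}(\alpha,\beta)$ on $[1,\infty]^2$, built from the rational kernels $P_x(\alpha)$, with the property that $\widehat{q}\geq 0$ for any horizontally or vertically valid move. The proof then exploits a forced concentration: the constraints $\widehat{g}(a,a)=1$, $\widehat{g}(a,2)\leq 2/a + \tau$, and $\widehat{g}(a,\infty)\leq 2$ force the profile restricted to the horizontal lines $y=b$ to be concentrated near $b\approx a$, with the concentration window shrinking like $O(\sqrt{\tau})$. The decisive ingredient---absent from your sketch---is a complex-analysis lemma (Proposition~\ref{prop:complexlarge}): a rational function with real poles that has $|f(0)|=1$ but is $\leq\nu$ on $[-1,1]\smallsetminus(-\delta,\delta)$ must have $\max_{|z|=1}|f(z)|\geq\nu^{-\Omega(1/\delta)}$. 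Applying this to the diagonal restriction $z\mapsto\widehat{\mathbf g}(z,z)$ forces a value of size $\exp(\Omega(\tau^{-1/2}))$ somewhere on the complex unit circle, which in turn lower-bounds $\|g\|_1$. Your ``almost biconcave kernel with calibrated slack'' heuristic points toward a different (and unsubstantiated) mechanism; the actual lever in the paper is that nonnegativity of the profile together with the forced concentration has no low-$1$-norm certificate, and that fact is proved analytically, not by a potential-difference argument.

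In short: the framing (point games, a profile, a $\sqrt{\epsilon}$-scaled gap) is in the right neighborhood, but the route you propose---a time-dependent potential with logarithmic accumulated drift---is not the paper's route, and its key quantitative claim is unsupported. The paper's actual argument replaces that step with (i) a trivial $1$-norm upper bound from the number of rounds, and (ii) a nontrivial $1$-norm lower bound via a complex-analytic concentration estimate on the profile's rational structure.
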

This result shows that, at least in the standard model, practical quantum weak coin-flipping with vanishing bias is not feasible. 

The proof of this result builds on previous techniques, including the concept of a \textit{valid time-independent point game} (abbreviated as ``valid TIPG'').  A valid TIPG is a pair of real-valued functions $m_1, m_2$ on $\mathbb{R}_{\geq 0} \times \mathbb{R}_{\geq 0}$ that satisfy a certain infinite set of linear constraints (see subsections~\ref{subsec:vpg}--\ref{subsec:relationship}).  
It is known that any weak coin-flipping protocol determines a valid TIPG, and vice versa.  This correspondence was used to prove the family of protocols with vanishing bias in \cite{Mochon:2007}.
Here I prove a negative result: any TIPG obtained from a weak coin-flipping protocol with small bias must have very large $1$-norm --- i.e., $\left\| m_1 \right\|_1 + \left\| m_2 \right\|_1$ must be very large as a function of $\epsilon$.  Since there is a relationship between the number of communication rounds of a protocol and the $1$-norms of its associated time-independent point games, this implies the main result.

Ambainis's original bound \cite{Ambainis:2004} of $\Omega ( \log \log (1/\epsilon ) )$ was based on a more direct study of quantum weak coin-flipping protocols: he performed an inductive argument, using the fidelity function, on the intermediate states arising in the protocol.  This approach appears to be fairly different from the one in this paper, although it may be possible to relate the two.

Theorem~\ref{thm:finalquote} is a further step in mapping out  the full range of cryptographic possibilities  in a two-party quantum setting (see \cite{Broadbent:2016, Wehner:2018} for surveys on this topic).  A number of other negative results are known: secure two-party computation (under certain  definitions) is impossible \cite{Lo:1997, Buhrman:2012}, and  strong coin-flipping \cite{Landau:2014, Chailloux:2017}, bit commitment~\cite{Lo:1997b,Mayers:1997, Chailloux:2017}, and oblivious transfer~\cite{Lo:1997, Chailloux:2013} are all impossible except with fixed positive bias.
This paper shows that the case of quantum weak coin-flipping is different: it can be achieved with arbitrarily small bias, but it is impossible to do so in polynomial time.

Certainly, this is not the end of the story.  The model for quantum weak coin flipping makes a number of assumptions, including  that the players exchange information in discrete stages, and that they are completely unconstrained in their ability to manipulate any quantum systems that are not under the control of the other player. This impossibility result gives us additional motivation to study coin-flipping in other settings, including relativistic models.

\subsection{Outline} 

Sections \ref{sec:prelim}--\ref{sec:coin} of this paper cover preliminaries and known material, and then new contributions appear in sections~\ref{sec:profile}--\ref{sec:lp4}.
In section~\ref{sec:profile}, I define the  \textit{profile} of a time-independent point game, which is a two-variable function associated to a time-independent point game that distills some of its most relevant information.  
Sections \ref{sec:highly}--\ref{sec:expect} prove some mathematical lemmas, including a key result on the behavior of highly concentrated rational functions (Proposition~\ref{prop:complexlarge}).  Section~\ref{sec:commbound} proves the result about the  $1$-norm of a time-independent point game using tools from the previous sections.
Finally, section~\ref{sec:lp4} proves Theorem~\ref{thm:finalquote}.   I conclude by noting some further directions in section~\ref{sec:dir}.

\subsection{Acknowledgements}

I am grateful to Aarthi Sundaram (my co-author on a companion project) for many interesting discussions about the coin-flipping literature which helped to seed some of the ideas contained here.  This paper also owes a large debt to Alexandre Eremenko, who showed me the complex analysis method that I used to prove Proposition~\ref{prop:complexlarge}.  
Thanks to Gorjan Alagic, Jonathan Katz, Yi-Kai Liu, and Scott Wolpert for their help with this project, and to Michael Newman for giving me an introduction to \cite{Aharonov:2016} some years ago.

This work is a contribution of the U.~S.~National Institute of Standards and Technology (NIST), and is not subject to copyright in the United States.  Any mention of commercial products is for information purposes only, and does not imply endorsement by NIST.

%\section{Proof Overview}

%\subsection{Review of point games}

%\subsection{The profile functions}

%\subsection{Concentration results}

%\label{subsec:pointgames}

\section{Preliminaries}

\label{sec:prelim}

%\CM{Note that, although we refer heavily to the work of Aharonov et al., the proof that we are building on is really Mochon's.}

Let $\mathbb{R}$ denote the set of real numbers, and let $\mathbb{R}_{\geq 0}$ denote the set of nonnegative real numbers. 
If $a,b$ are real numbers with $a \leq b$, then $[a,b]$ denotes the closed interval $\left\{ x \mid a \leq x \leq b \right\}$, $(a,b)$ denotes the open interval $\left\{ x \mid a < x < b \right\}$, and $[a,b)$ and $(a,b]$ are similarly defined.  We let $\infty$ denote infinity, and define intervals such as $[a,\infty] \subseteq \mathbb{R} \cup \left\{ \infty \right\}$ in the obvious way.  If $A$ is a set and $B \subseteq A$ is a subset, then $A \smallsetminus B$ denotes
the set of all elements of $A$ that are not in $B$.

Our notation follows previous work \cite{Mochon:2007, Aharonov:2016, Arora:2019} in part.
Since we will work extensively with functions on $\mathbb{R}_{\geq 0}$ that have finite support, we make the following definitions.
\begin{definition}
For any $x \in \mathbb{R}_{\geq 0}$, let $\llbracket x \rrbracket$ denote the function from $\mathbb{R}_{\geq 0}$ to $\mathbb{R}$ which maps $x$ to $1$ and is zero elsewhere.
For any $x, y \in \mathbb{R}_{\geq 0}$, let $\llbracket x,y \rrbracket $ denote the function from $\mathbb{R}_{\geq 0} \times \mathbb{R}_{\geq 0}$ to $\mathbb{R}$ which maps $(x,y)$ to $1$ and is zero elsewhere.
\end{definition}

If $f$ is a real-valued function on a set $S$, define $f^+ \colon S \to \mathbb{R}$ and $f^- \colon S \to \mathbb{R}$ by
\begin{eqnarray}
f^+ (x ) & = & \textnormal{max} \{ 0, f (x) \}, \\
f^- (x ) & = & \textnormal{max} \{ 0, -f (x) \}.
\end{eqnarray}
Note that $f = f_+ - f_-$.  Let $\Supp f$ denote the support of $f$ (i.e., the set of points in $S$ on which $f$ is nonzero). Let $f^\top$ denote the function $f^\top (x, y) = f(y, x)$.  When $f$ is a function with finite support, then (even if the set $S$ is not countable) we will write $\sum_{s \in S} f ( s )$ to mean the sum of $f ( s)$ over all points in the support of $f$.  The expression $\left\| f \right\|_1$ denotes the sum $\sum_{s \in S} \left| f ( s ) \right|$ (that is, the $1$-norm of $f$).

The function $\log \colon \mathbb{R}_{\geq 0} \to \mathbb{R} \cup \{ - \infty \}$ denotes the logarithm in base $2$.  

We use the term \textbf{universal function} to mean a function that is not dependent on any variables other than its input variables.  Thus, even if we refer to a universal function after some variables have been quantified (e.g., ``for all $c$, ...'') it is understood that the function has no implicit dependencies on those variables.  We will use boldface Roman letters ($\mathbf{A}, \mathbf{B}, \ldots$) for universal functions.

When we use asymptotic big-$O$ notation, we may use $O( u )$ as a set (e.g., ``there exists $\mathbf{F}(u) \in O ( u )$ such that ...'') or as a placeholder for a function (e.g., ``$x = y + O ( z )$'').
When a big-$O$ expression is used as a placeholder, it is understood that it also represents a universal function with no implicit dependencies. 

If $Q$ is an event, then we write $\mathbb{P} [ Q ]$ for the probability of $Q$, and if $X$ is a real-valued random variable, then we write $\mathbb{E} [ X ]$ for the expectation of $X$. A \textbf{stochastic map} from a  set $A$ to a set $B$ is an indexed set of nonnegative real values
\begin{eqnarray}
V & = & \left\{ v_{ab} \mid a \in A, b \in B \right\}
\end{eqnarray}
such that $\sum_b v_{ab} = 1$ for all $a \in A$.  For any $a \in A$, the values $\left\{ v_{ab} \mid b \in B \right\}$ define a random variable on $B$ which we denote by $V ( a )$.  

If $\mathcal{A}$ and $\mathcal{B}$ are Hilbert spaces, then we may write $\mathcal{A} \mathcal{B}$ for the tensor product $\mathcal{A} \otimes \mathcal{B}$.

\subsection{Complex analysis}

\label{subsec:complex}

We briefly cover some complex analysis tools that will be important in section~\ref{sec:highly}.  The reader can consult \cite{Ahlfors:1979} for more details.

Let $\mathbb{C}$ denote the set of complex numbers.  We will apply addition and multiplication to $\mathbb{C} \cup \left\{ \infty \right\}$ using natural rules ($c + \infty = \infty, 1/\infty = 0$, etc.).  For any $z \in \mathbb{C}$ and $r \geq 0$, let
\begin{eqnarray}
\mathbb{D} ( z, r ) & = & \left\{ w \in \mathbb{C} \mid \left| z - w \right| < r \right\} \\
\mathbb{S} ( z, r ) & = & \left\{ w \in \mathbb{C} \mid \left| z - w \right| = r \right\}.
\end{eqnarray}
(The set $\mathbb{D} ( z, r )$ is the open disc of radius $r$ centered at $z$, and the set $\mathbb{S} ( z, r )$ is the circle of radius $r$ centered at $z$.)
When $z = 0$ and $r = 1$, we may write these sets simply as $\mathbb{D}$ and $\mathbb{S}$.  Also let
\begin{eqnarray}
\mathbb{H}  & = & \left\{ w \in \mathbb{C} \mid \Im w > 0 \right\}.
\end{eqnarray}
If $Y \subseteq \mathbb{C} \cup \{ \infty \}$, then $\overline{Y}$ denotes the closure of $Y$.
For ease of notation we will write $\overline{\mathbb{D}} ( z, r )$ for the closure of $\mathbb{D} ( z, r )$.  Let $\overline{\mathbb{H}}$ denote the set $\left\{ z \in \mathbb{C} \mid \Im z \geq 0 \right\} \cup \{ \infty \}$.

If $S \subseteq \mathbb{C}$ is an open set, then a function $f \colon S \to \mathbb{C}$ is \textbf{analytic} if for any $p \in S$, $f$ can be expressed as a power series on some open neighborhood of $p$.  If $f$ is analytic and $\overline{\mathbb{D}} ( z, r )$ is a closed disc within its domain, then
the following equation always holds:
\begin{eqnarray}
\frac{1}{2 \pi} \int_{0}^{2\pi} f ( z + r e^{i \theta} ) d \theta & = & f ( z ).
\end{eqnarray}

\section{Review of quantum weak coin-flipping}

\label{sec:coin}

In this section we review the common formal framework 
for quantum weak coin-flipping, including the mathematical construction of a \textit{point game} (which is attributed to A.~Kitaev).  Since this framework has already seen thorough treatment in
\cite{Mochon:2007, Aharonov:2016, Arora:2019}, 
we will mainly provide only definitions and statements of results here. Our terminology and notation are derived most directly from
\cite{Aharonov:2016}.

\subsection{Weak coin flipping protocols}

\label{subsec:coinintro}

The definition of a weak coin flipping protocol (which we will first sketch, and then state formally) is intended to capture a general situation where two parties with competing interests are trying to fairly flip a coin.    There are two possible outcomes: $0$ (or ``heads'') which is the desired outcome for Alice, and $1$ (or ``tails'') which is the desired outcome for Bob.  There is no trusted third party in the protocol, and thus it consists entirely of communication between Alice and Bob.  At the end of the protocol, the parties report bits $a$ and $b$ respectively (representing what they ostensibly believe to be the outcome of the coin flip).

The protocol is accomplished by Alice and Bob passing a quantum system (represented by the finite-dimensional Hilbert space $\mathcal{M}$) back and forth  between them, while keeping private systems (represented by $\mathcal{A}$ and $\mathcal{B}$) to themselves.  In each odd round $i$, Alice performs a unitary operator $U_i$ on $\mathcal{A} \mathcal{M}$ followed by a binary projective measurement $\left\{ E_i, I_{\mathcal{AM}} - E_i \right\}$ on $\mathcal{AM}$.  If the latter measurement fails --- that is, if its postmeasurement state is not in $\Supp E_i$ --- then Alice aborts the protocol and simply reports her favored outcome $0$.   
(This event is understood to mean that Alice has stopped because she suspects cheating.)
On even rounds, Bob does analogous operations with operators $U_i, E_i$ on $\mathcal{MB}$.  At the end of the protocol, if neither party has aborted, they each perform a binary projective measurement on their private system and report the result (as $a$ and $b$, respectively).  The definition requires that if Alice and Bob behave honestly in the protocol, then the probability that $a = b = 0$ is $1/2$, and the probability that $a = b = 1$ is $1/2.$

\begin{definition}
\label{def:coinprotocol}
A \textbf{weak coin-flipping protocol} $\mathbf{C}$ consists of the following data:
\begin{itemize}
    \item Finite-dimensional Hilbert spaces $\mathcal{A}, \mathcal{M}, \mathcal{B}$ (Alice's system, the message system, and Bob's system),
    
    \item An even positive integer $n$ (the number of rounds),
    
    \item An initial pure state $\psi_0$ on $\mathcal{AMB}$ of the form
    \begin{eqnarray}
    \psi_0 & = & \psi_{\mathcal{A}, 0} \otimes
    \psi_{\mathcal{M}, 0} \otimes \psi_{\mathcal{B}, 0},
    \end{eqnarray}

    \item For each odd value $i$ from $\{ 1, 2, \ldots, n \}$, a unitary operator $U_i$ on $\mathcal{AM}$ and a Hermitian projection operator $E_i$ on $\mathcal{AM}$,

    \item For each even value $i$ from $\{ 1, 2, \ldots, n \}$, a unitary operator $U_i$ on $\mathcal{MB}$ and a Hermitian projection operator $E_i$ on $\mathcal{MB}$,

    \item Binary projective measurements $\{ \Pi_{\mathcal{A}}^0, \Pi_{\mathcal{A}}^1 \}$ and $\{ \Pi_{\mathcal{B}}^0, \Pi_{\mathcal{B}}^1 \}$ on $\mathcal{A}$ and $\mathcal{B}$, respectively,
\end{itemize}
where the following condition holds: the state
\begin{eqnarray}
\psi_n & := & E_n U_n E_{n-1} U_{n-1} \cdots E_1 U_1 \psi_0
\end{eqnarray}
(which is referred to as the \textbf{final state} of the protocol) satisfies
\begin{eqnarray}
&
\left\|  \Pi_{\mathcal{A}}^0 \otimes  \Pi_{\mathcal{B}}^0 \left| \psi_n \right> \right\|^2  =    \left\| \Pi_{\mathcal{A}}^1 \otimes \Pi_{\mathcal{B}}^1 \left| \psi_n \right> \right\|^2   =  \frac{1}{2}.
\end{eqnarray}
\end{definition}
For the definition above, the states 
\begin{eqnarray}
\psi_i & := & E_i U_i E_{i-1} U_{i-1} \cdots E_1 U_1 \psi_0
\end{eqnarray}
for $i \in \{ 1, \ldots, n-1 \}$, are referred to as the \textbf{intermediate states} of the protocol.

Let us suppose that Bob (whose goal is to force Alice to report $a = 1$) chooses to behave dishonestly in the protocol.  In that case, he can apply arbitrary unitary operations $V_2 , V_4, V_6, \ldots$ on $\mathcal{M} \mathcal{B}$ in place of $E_2 U_2, E_4 U_4, E_6 U_6, \ldots$.  (We do not account for any measurements performed by Bob in this case, since his own output is irrelevant.)  
This motivates the following definition.    The \textbf{cheating probability} for Bob (in protocol $\mathbf{C}$) is the maximum of 
\begin{eqnarray}
\left\| \Pi_\mathcal{A}^1 V_n (E_{n-1} U_{n-1}) V_{n-2} (E_{n-3} U_{n-3} ) V_{n-4} (E_{n-5} U_{n-5} ) \ldots V_2 (E_1 U_1 ) \left| \psi_0 \right> \right\|^2
\end{eqnarray}
over all unitary operators $V_2, V_4, \ldots, V_n$ on $\mathcal{MB}$.  

Let $P_B^*$ denote the cheating probability for Bob, and let $P_A^*$ denote the cheating probability for Alice (defined analogously).
Then, the $\textbf{bias}$ of the weak coin-flipping protocol $\mathbf{C}$ is the quantity
\begin{eqnarray}
\textnormal{max} \left\{ P_A^* - \frac{1}{2} , P_B^*
- \frac{1}{2} \right\}.
\end{eqnarray}

\subsection{Valid point games}

\label{subsec:vpg}

Valid point games are elegant mathematical constructions which, as we will see, are in a near-perfect correspondence with weak coin-flipping protocols.  Because they are a lot simpler to define, valid point games provide a convenient method of reduction for questions about weak coin-flipping protocols.  We will give the definition of valid point games in this subection, and then explain their relationship to weak coin-flipping protocols in subsection~\ref{subsec:relationship}.
We use standard terminology with a few additions.

\begin{definition}
A \textbf{one-dimensional move} is a function $\ell \colon \mathbb{R}_{\geq 0} \to \mathbb{R}$ that has finite support.  A one-dimensional move is called a one-dimensional \textbf{configuration} if $\ell \geq 0$.
A \textbf{two-dimensional move} is a function $q \colon \mathbb{R}_{\geq 0} \times \mathbb{R}_{\geq 0} \to \mathbb{R}$ that has finite support.  A two-dimensional move is called a two-dimensional \textbf{configuration} if $q \geq 0$.
\end{definition}

\begin{remark}
When we use the words \textbf{move} or \textbf{configuration} by themselves, we will always mean a two-dimensional move or two-dimensional configuration.
\end{remark}

Given a move $q$, it can be helpful to visualize $q$ by graphing its support set $\Supp q$ and writing out the values $q (x, y)$ associated to each point $(x,y) \in \Supp q$. See Figure~\ref{fig:move} for an example of a move whose support is of size $5$.
\begin{figure}
    \centering
    \includegraphics[scale=0.5]{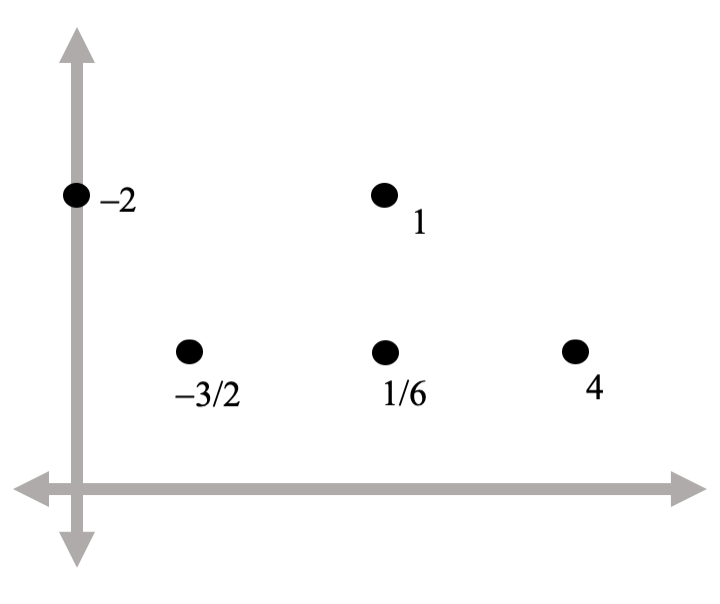}
    \caption{An example of a diagram of a $2$-dimensional move $w$ on a Cartesian coordinate system.  The marked points are the elements of $\Supp w$, and the labels indicate the values taken by $w$.}
    \label{fig:move}
\end{figure}

A time-dependent point game is, roughly speaking, a finite sequence of two-dimensional configurations
$z_0, \ldots, z_n$.  However, for mathematical convenience, we define time-dependent point games in terms of the moves $(z_i - z_{i-1})$ rather than the configurations $z_i$.  Recall that for any real-valued function $f$, we write $f^+$ and $f^-$ for the positive and negative parts of $f$, respectively.

\begin{definition}
\label{def:tdpg}
A time-dependent point game $M$ is a sequence of moves $(m_1, \ldots, m_n)$ such that
\begin{eqnarray}
\left( \sum_{i=1}^n m_i \right)^- + \sum_{i=1}^j m_i & \geq & 0
 \label{impositivity}
\end{eqnarray}
for all $j \in \{ 1, 2, \ldots, n-1 \}$.
\end{definition}
In the above definition, we refer to  $\left( \sum_{i=1}^n m_i \right)^-$ as the \textbf{initial configuration} of $M$, and we refer to $\left( \sum_{i=1}^n m_i \right)^+$ as the \textbf{final configuration} of $M$.  The configurations on the left side of inequality (\ref{impositivity}) are referred to as the \textbf{intermediate configurations} of $M$.  
The \textbf{support} of a point game $M$, denoted $\Supp M$, is the union of the supports of the moves in $M$. 

Next we will define valid moves. 
\begin{definition}
\label{def:valid}
A one-dimensional move $\ell$ is \textbf{valid} if 
the following conditions hold:
\begin{eqnarray}
\sum_x \ell (x ) & = & 0 \\
\label{valid2}
\sum_x \left( \frac{x  }{x + \lambda} \right) \ell ( x ) & \geq & 0  
\hskip0.2in \forall \lambda > 0 \\
\label{valid3}
\sum_x x \cdot \ell (x ) & \geq & 0.
\end{eqnarray}
\end{definition}
(We note that condition (\ref{valid3}) is actually redundant, since it can be proved from (\ref{valid2}).) Equivalently, a move is valid if and only if its inner product with any operator monotone function is nonnegative (see subsection 3.2 of \cite{Aharonov:2016}).
We note the following useful fact, which is easily proved from Definition~\ref{def:valid}.
\begin{proposition}
\label{prop:stretch}
If $\ell$ is a valid one-dimensional move, and $c > 0$, then the function $x \mapsto \ell ( cx)$ is also a valid one-dimensional move.  $\Box$
\end{proposition}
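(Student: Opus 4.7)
The plan is to verify directly that $\ell'(x) := \ell(cx)$ satisfies all three conditions of Definition~\ref{def:valid}, using the change of variables $y = cx$. Since $\Supp \ell'$ is obtained from $\Supp \ell$ by dividing each element by $c$, the sums over $\Supp \ell'$ can be rewritten as sums over $\Supp \ell$ in a transparent way.

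First I would check condition~$1$: $\sum_x \ell'(x) = \sum_x \ell(cx) = \sum_y \ell(y) = 0$, where the middle equality is the substitution $y = cx$ (which is a bijection on $\mathbb{R}_{\geq 0}$ since $c > 0$). Condition~(\ref{valid3}) is equally immediate: $\sum_x x \cdot \ell(cx) = \frac{1}{c} \sum_y y \cdot \ell(y) \geq 0$.

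The main content is condition~(\ref{valid2}). For any $\lambda > 0$, the same substitution gives
\begin{eqnarray}
\sum_x \frac{x}{x + \lambda} \ell(cx) & = & \sum_y \frac{y/c}{y/c + \lambda} \ell(y) \;\; = \;\; \sum_y \frac{y}{y + c\lambda} \ell(y).
\end{eqnarray}
Since $c\lambda > 0$, this is precisely condition~(\ref{valid2}) applied to $\ell$ with parameter $c\lambda$, hence nonnegative. So the rescaling $\lambda \mapsto c\lambda$ is a bijection on $(0,\infty)$, and varying $\lambda$ over all positive reals for $\ell'$ is the same as varying the parameter over all positive reals for $\ell$.

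There is no real obstacle here — the proposition is a routine invariance statement expressing that the defining constraints of validity are homogeneous under rescaling of the $x$-axis. The only thing to be careful about is treating $c=0$ or $c<0$ (both excluded by hypothesis), since the bijection on $\mathbb{R}_{\geq 0}$ and the sign of $c\lambda$ both rely on $c > 0$.
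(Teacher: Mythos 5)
Your proof is correct and is precisely the routine verification-by-substitution that the paper has in mind (the paper omits the argument, stating only that the proposition is ``easily proved from Definition~\ref{def:valid}''). The key observation — that the substitution $y = cx$ sends the parameter $\lambda$ in condition~(\ref{valid2}) to $c\lambda$, which still ranges over all of $(0,\infty)$ — is exactly right.
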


For any two-dimensional move $q$, the \textbf{rows} of $q$ are the functions of the form $x \mapsto q ( x, y )$ (for $y \in \mathbb{R}_{\geq 0}$) and the \textbf{columns} of $q$ are the functions of the form $y \mapsto q ( x, y)$ (for $x \in \mathbb{R}_{\geq 0}$).

\begin{definition}
A two-dimensional move is \textbf{horizontally valid} if its rows are valid.  A two-dimensional move is \textbf{vertically valid} if its columns are valid.  A point game $(m_1, \ldots, m_n)$ is a \textbf{valid point game} if $m_i$ is horizontally valid for all odd $i$, and $m_i$ is vertically valid for all even $i$.
\end{definition}

For later use, we define the concept of a time-independent point game, which is simply a sequence of  two-dimensional moves with no restrictions on nonnegativity (unlike Definition~\ref{def:tdpg}).  
In this paper, as in previous papers on weak coin-flipping, it is only useful to consider time-independent point games that involve $2$ moves, and so we confine our definition accordingly. 

\begin{definition}
A \textbf{time-independent point game (TIPG)} is a pair of moves $(r_1, r_2)$.  
\end{definition}

We apply the term ``valid'' to time-independent point games in the obvious way: a time-independent  point game $(m_1, m_2)$ is valid if $m_1$ is horizontally valid and $m_2$ is vertically valid.  Note that any time-dependent point game $(m_1, \ldots, m_n)$ yields a time-independent point game 
\begin{eqnarray}
\left( m_1 + m_3 + m_5 + \cdots ,  m_2 + m_4 + m_6 + \cdots \right)
\end{eqnarray}
and the latter game is valid if $(m_1, \ldots, m_n)$ is valid.

\begin{remark}
When we use the term \textbf{point game} by itself, we will always mean  a time-dependent point game.
\end{remark}

\subsection{The relationship between point games and coin-flipping protocols}

\label{subsec:relationship}

Now we will state the known results which motivate our study of valid point games.  There is a close correspondence between valid point games and weak coin-flipping protocols, and this correspondence allows us to deduce assertions about coin-flipping protocols (of both existence and impossibility) by studying properties of point games.

\begin{theorem}
\label{thm:wcf2pg}
Suppose that $\mathbf{C}$ is an $n$-round weak coin-flipping protocol with cheating probabilities $P_A^*$ and $P_B^*$, and that $\delta > 0$.  Then, there exists a valid point game $M = (m_1, \ldots, m_n)$ with  initial configuration  $\frac{1}{2} ( \llbracket 1,0 \rrbracket + \llbracket 0,1 \rrbracket )$ and
final configuration $\llbracket P_A^* + \delta, P_B^* + \delta \rrbracket$.
\end{theorem}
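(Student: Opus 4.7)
The plan is to follow the Kitaev--Mochon prob-game construction, in the form given in \cite{Mochon:2007, Aharonov:2016}. For each intermediate state $\psi_i$ of the protocol I would associate a two-dimensional configuration $z_i$ supported in $\mathbb{R}_{\geq 0} \times \mathbb{R}_{\geq 0}$, with $z_0 = \tfrac{1}{2}(\llbracket 1,0 \rrbracket + \llbracket 0,1 \rrbracket)$ and $z_n = \llbracket P_A^* + \delta,\, P_B^* + \delta \rrbracket$; the moves $m_i := z_i - z_{i-1}$ then form the candidate point game. Because each $z_i$ is a configuration (pointwise nonnegative), the intermediate-configuration nonnegativity condition of Definition~\ref{def:tdpg} is automatic.

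To build $z_i$, I would use optimal cheating SDPs. For each $i$, let $Z_B^{(i)}$ be the Hermitian operator on Alice's private side of $\psi_i$ whose expectation on a purification equals the maximum probability that a dishonest Bob can force Alice to output $1$ starting from $\psi_i$; define $Z_A^{(i)}$ symmetrically on Bob's side. Both have spectrum in $[0,1]$, and act on disjoint subsystems. Setting
$$ z_i(x,y) \;=\; \Tr\!\bigl( ( \Pi_{Z_B^{(i)} = x} \otimes \Pi_{Z_A^{(i)} = y} )\, \rho_i \bigr), $$
where $\rho_i$ is the reduced density of $\psi_i$ on $\mathcal{A}\mathcal{B}$, yields a nonnegative configuration with finite support. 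A direct calculation shows $z_0$ has the claimed form (the initial state is a product, so $Z_A^{(0)}, Z_B^{(0)}$ are the cheating operators of the trivial SDPs, with eigenvalues $0$ and $1$ weighted by the honest outcome probabilities $1/2, 1/2$). At the final round, $Z_A^{(n)}$ and $Z_B^{(n)}$ are the global cheating probabilities $P_A^*, P_B^*$, up to the $\delta$-slack that absorbs the possible non-attainment of the cheating SDP optimum; this slack is exactly what forces a perturbation of the final configuration to the single point $\llbracket P_A^* + \delta, P_B^* + \delta \rrbracket$.

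The heart of the argument is to verify the validity of each $m_i$. On an odd round only Alice acts, so $Z_A^{(i)} = Z_A^{(i-1)}$ and the support of $m_i$ is contained in lines of constant $y$; hence $m_i$ will be horizontally valid iff each of its rows $\ell_y(x) := m_i(x,y)$ is a valid one-dimensional move in the sense of Definition~\ref{def:valid}. Equivalent characterization via operator-monotone functions $x \mapsto x/(x+\lambda)$ (recorded in subsection 3.2 of \cite{Aharonov:2016}) reduces this to an operator-monotone inequality between the spectral distributions of $Z_B^{(i-1)}$ and $Z_B^{(i)}$. That inequality, in turn, follows from SDP duality: Alice's unitary-plus-projection step, after tracing out $\mathcal{M}$, implements a completely positive map whose dual is shown to preserve the cone generated by operator-monotone functions. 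The even rounds follow by symmetry.

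The main obstacle is precisely this validity step, i.e., Kitaev's operator-monotonicity lemma. I would invoke it as a black box from \cite{Mochon:2007, Aharonov:2016}, since its proof is the technical core of those works; it rests on writing the cheating SDP optimum as a limit of dual-feasible points and absorbing the duality gap into the additive $\delta$. Everything else in the proof --- nonnegativity of $z_i$, the endpoint computations, and the fact that only rows (resp.\ columns) of $m_i$ change on odd (resp.\ even) rounds --- is a direct unpacking of the definitions once the operators $Z_A^{(i)}, Z_B^{(i)}$ are constructed.
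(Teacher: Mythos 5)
The paper does not re-derive Theorem~\ref{thm:wcf2pg}; it cites the semidefinite-programming-duality proof from \cite{Aharonov:2016}. Your sketch follows the same high-level plan as that cited argument --- dual SDP operators $Z_A^{(i)}, Z_B^{(i)}$, joint spectral measures against honest intermediate states, and Kitaev's operator-monotone validity lemma invoked as a black box --- so the overall route is the intended one.

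That said, you have the time orientation of the point game reversed, and this breaks both of your endpoint computations. The two-support initial configuration $\frac{1}{2}(\llbracket 1,0 \rrbracket + \llbracket 0,1 \rrbracket)$ arises from the \emph{final} protocol state $\psi_n$, not from $\psi_0$: at round $n$ the dual variables are pinned to the output projectors $Z_A^{(n)} = \Pi_{\mathcal{B}}^0$ and $Z_B^{(n)} = \Pi_{\mathcal{A}}^1$, whose eigenvalues are $0,1$, and the correctness condition on $\psi_n$ puts weight $\tfrac{1}{2}$ on each of the two agreement subspaces, giving exactly that measure. Conversely, the single-point final configuration $\llbracket P_A^* + \delta, P_B^* + \delta \rrbracket$ arises from the \emph{initial} protocol state $\psi_0$: it is $\langle \psi_0 | Z_A^{(0)} | \psi_0 \rangle$ and $\langle \psi_0 | Z_B^{(0)} | \psi_0 \rangle$ that equal the cheating probabilities up to $\delta$, and the product structure of $\psi_0$ (pure reduced states on each side) is what lets one pick $Z_A^{(0)}, Z_B^{(0)}$ diagonal in a basis containing the relevant pure vector, collapsing the spectral measure to one point. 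The correct indexing is $z_j \leftrightarrow \psi_{n-j}$, with $x$ tracking $Z_A$ (which lives on Bob's side, being dual to Alice's cheating SDP) and $y$ tracking $Z_B$; this keeps the odd-move-horizontal convention of the paper consistent. Your ``direct calculation'' for $z_0$ --- a product state yielding a two-support measure with eigenvalues $0,1$ weighted by honest outcome probabilities --- actually describes $\psi_n$ under the output projectors, not $\psi_0$; a product state does not yield that configuration. This is a substantive error: as written, both endpoints of the claimed point game are wrong under your own construction, and the role you assign the $\delta$-slack (forcing the single point at the final protocol state) is misplaced in the same way.
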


If $M$ is a point game from the configuration $\frac{1}{2}(\llbracket 0,1 \rrbracket + \llbracket 1,0 \rrbracket)$ to a single point $[\alpha, \beta]$, then we will naturally refer to the quantity $\max \{ \alpha,\beta \} - 1/2$ as the \textbf{bias} of $M$.  The above theorem can be understood as asserting that if an $n$-round weak coin-flipping protocol exists with bias $\epsilon$, then there are $n$-round valid point games with bias arbitrarily close to $\epsilon$.  
A proof of Theorem~\ref{thm:wcf2pg}, which is based on semidefinite programming duality, is given in \cite{Aharonov:2016}.\footnote{The
proof appears
in sections 1--3 in \cite{Aharonov:2016}.  Note that the definitions of point games used by the authors in \cite{Aharonov:2016} are the ``transpose'' of ours: under their definitions, a valid point game begins with a vertically valid move rather than a horizontally valid move.}
(We note that Theorem~\ref{thm:wcf2pg} also has a converse, although we will not need it here.  See Theorem 3 and Theorem 4 in \cite{Aharonov:2016}.)

\begin{comment}The above theorem asserts that one can always convert a weak coin-flipping protocol into a valid point game if one is willing to accept an arbitrarily small increase in the bias.  The next theorem essentially asserts the reverse.

\begin{theorem}
\label{thm:pg2wcf}
Suppose that $M = (m_1, \ldots, m_n)$ is a valid time-dependent point game whose initial configuration is $(1/2)(\llbracket 0,1 \rrbracket + \llbracket 1,0
\rrbracket)$ and whose outcome is $
\llbracket \alpha, \beta \rrbracket $, and suppose that $\delta > 0$.  Then, there exists a $n$-round weak coin-flipping protocol $\mathbf{C}$ whose cheating probabilities satisfy $P_A^* \leq \alpha + \delta$, and $P_B^* \leq \beta + \delta$.
\end{theorem}

\begin{proof}
This follows from Theorem 3 and Theorem 4
in \cite{Aharonov:2016}. 
\end{proof}

(We note that the proof in \cite{Aharonov:2016} referred to above is nonconstructive, but it has since been made explicit in the more recent paper \cite{Arora:2019}.)

\end{comment}

Next we assert a theorem about the relationship between weak coin-flipping protocols and time-independent point games.

\begin{theorem}
\label{thm:wcf2tpg}
Suppose that $\mathbf{C}$ is an $n$-round weak  coin-flipping protocol with cheating probabilities $\alpha := P_A^*$ and $\beta := P_B^*$, and suppose that $\delta > 0$.  Then, there exists a valid TIPG $R = (r_1, r_2)$ such that
\begin{eqnarray}
r_1 + r_2 & = & -\frac{1}{2}\llbracket 0,1 \rrbracket - \frac{1}{2} \llbracket 1,0 \rrbracket + \llbracket \alpha + \delta, \beta + \delta \rrbracket
\label{ti-target}
\end{eqnarray}
and
\begin{eqnarray}
\left\| r_1 \right\|_1 + \left\| r_2 \right\|_1 & \leq & 2n.
\end{eqnarray}
\end{theorem}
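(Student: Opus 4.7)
The plan is to derive the time-independent point game by collapsing a time-dependent one obtained from Theorem~\ref{thm:wcf2pg}, then group moves by parity. First I would apply Theorem~\ref{thm:wcf2pg} (with the same $\delta$) to obtain a valid time-dependent point game $M = (m_1, \ldots, m_n)$ whose initial configuration is $\frac{1}{2}(\llbracket 1,0 \rrbracket + \llbracket 0,1 \rrbracket)$ and whose final configuration is $\llbracket \alpha + \delta, \beta + \delta \rrbracket$. Then I would set
\begin{eqnarray}
r_1 \;=\; \sum_{i \text{ odd}} m_i, \qquad r_2 \;=\; \sum_{i \text{ even}} m_i.
\end{eqnarray}
By definition of a time-dependent point game, $\sum_{i=1}^n m_i = (\sum_i m_i)^+ - (\sum_i m_i)^- = $ (final) $-$ (initial), which immediately yields equation~(\ref{ti-target}).

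Next I would verify validity. Each odd $m_i$ is horizontally valid, meaning every row is a valid one-dimensional move. The three defining conditions in Definition~\ref{def:valid} are each linear in $\ell$ (the middle condition is linear for every fixed $\lambda > 0$), so a sum of valid one-dimensional moves is valid. Applied row by row, this shows $r_1$ is horizontally valid; by the symmetric argument $r_2$ is vertically valid, so $(r_1, r_2)$ is a valid TIPG.

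The key quantitative step is the $1$-norm bound, which I expect to be the only place where real work is needed. I would argue that every move in the time-dependent game has $\left\| m_i \right\|_1 \leq 2$. Let $c_0, c_1, \ldots, c_n$ denote the initial, intermediate, and final configurations; by Definition~\ref{def:tdpg} each $c_j \geq 0$ and $c_j - c_{j-1} = m_j$. Because $m_j$ is either horizontally or vertically valid, every row (or column) sums to zero, so $\sum_{(x,y)} m_j(x,y) = 0$ and hence $\left\| c_j \right\|_1 = \left\| c_0 \right\|_1 = 1$ for every $j$. The pointwise inequalities $m_j^+ \leq c_j$ and $m_j^- \leq c_{j-1}$ (both following from $c_{j-1}, c_j \geq 0$) then give
\begin{eqnarray}
\left\| m_j \right\|_1 \;=\; \left\| m_j^+ \right\|_1 + \left\| m_j^- \right\|_1 \;\leq\; \left\| c_j \right\|_1 + \left\| c_{j-1} \right\|_1 \;=\; 2.
\end{eqnarray}

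Finally, applying the triangle inequality,
\begin{eqnarray}
\left\| r_1 \right\|_1 + \left\| r_2 \right\|_1 \;\leq\; \sum_{i=1}^n \left\| m_i \right\|_1 \;\leq\; 2n,
\end{eqnarray}
which completes the proof. The only subtle point is the per-move norm bound, which rests on the nonnegativity of intermediate configurations (Definition~\ref{def:tdpg}) together with the zero-sum property inherited from validity; everything else is a linearity or bookkeeping argument.
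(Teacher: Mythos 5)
Your proposal is correct and follows the same route as the paper's proof: apply Theorem~\ref{thm:wcf2pg}, set $r_1, r_2$ to be the odd- and even-indexed partial sums of moves, and bound each $\left\| m_i \right\|_1$ by $2$ using the fact that consecutive intermediate configurations are nonnegative with $1$-norm equal to one. You spell out a few details the paper leaves implicit (linearity of the validity conditions, the $m_j^+ \leq c_j$ and $m_j^- \leq c_{j-1}$ bounds versus the paper's terser invocation of the triangle inequality), but the argument is the same.
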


\begin{proof}
By Theorem~\ref{thm:wcf2pg}, there is a valid time-dependent point game $M = (m_1, \ldots, m_n)$ whose sum is equal to the right-hand side of equation (\ref{ti-target}).  Note that since the initial configuration $\frac{1}{2}\llbracket 0,1 \rrbracket + \frac{1}{2} \llbracket 1,0 \rrbracket$ of this game has $1$-norm equal to one, and each move $m_i$ sums to zero, the intermediate configurations all also have $1$-norm equal to one, and therefore $\left\| m_i \right\|_1 \leq 2$ for all $i$.  Therefore the TIPG $(r_1, r_2) := (m_1 + m_3 + m_5 + \cdots , m_2 + m_4 + m_6 + \cdots )$ satisfies the desired conditions.
\end{proof}

We make one final note about symmetry in valid TIPGs.
\begin{remark}
Let us say that a  move $q$ is \textbf{symmetric} if $q^\top = q$, and that a TIPG $Q=(q_1, q_2)$ is symmetric if $q_2 = q_1^\top$.  Note that if a valid TIPG $R = (r_1, r_2)$ achieves a symmetric move (such as the biased coin-flip move $[\frac{1}{2} + \epsilon, \frac{1}{2} + \epsilon] - \frac{1}{2} [0,1] - \frac{1}{2} [1,0]$) then there necessarily exists a symmetric valid TIPG that achieves the same move --- specifically, $R' := ((r_1 + r_2^\top)/2, (r_1^\top + r_2)/2)$.
\end{remark}

\section{The profile of a move}

\label{sec:profile}

We now begin the contributions of this paper.  We start by introducing the idea of a \textbf{profile function} of a move.  If $q$ is a two-dimensional move, then its profile function, denoted $\widehat{q}$, is a real-valued function on $[1, \infty] \times [1, \infty]$.  Profile functions have geometric features that we can use to our advantage when trying to answer questions about point games.

\subsection{Definition}

We begin with the definition of the profile function in the one-dimensional case.  The one-dimensional profile function can be thought of simply as a construction that bundles together the three conditions that define a valid move (Definition~\ref{def:valid}).

\begin{definition}
\label{def:1profile}
For any positive real number $x$, let
$P_x \colon [1, \infty ] \to \mathbb{R}$ be defined by
\begin{eqnarray}
\label{singleprofile}
P_x ( \alpha ) & = & \left\{ \begin{array}{cl} 1 & \hskip0.2in \textnormal{ if } \alpha \in [1,2) \\ \\
(x\alpha  -  x)/(x + \alpha - 2) & \hskip0.2in \textnormal{ if } \alpha \in [2,\infty) \\ \\
x & \hskip0.2in \textnormal{ if } \alpha = \infty .
\end{array}
\right.
\end{eqnarray}
If $x = 0$, then let $P_0 \colon [1, \infty ] \to \mathbb{R}$ be defined by
\begin{eqnarray}
P_0 ( \alpha ) & = & \left\{ \begin{array}{cl} 1 & \hskip0.2in \textnormal{ if } \alpha \in [1,2) \\ \\
0 & \hskip0.2in \textnormal{ if } \alpha \in [2,\infty] .
\end{array}
\right.
\end{eqnarray}
The function $P_x$ is referred to as the \textbf{profile of $x$}.

For any one-dimensional move $\ell \colon \mathbb{R}_{\geq  0} \to \mathbb{R}$, the \textbf{profile of $\ell$}, denoted $\widehat{\ell}$, is the function
from $[1, \infty]$ to $\mathbb{R}$ given by
\begin{eqnarray}
\widehat{\ell} = \sum_x \ell ( x ) P_x.
\end{eqnarray}
\end{definition}
The profile function $\widehat{\ell}$ collects some useful information about $\ell$.  One can easily verify from the definition that
\begin{eqnarray}
\label{prof1}
\widehat{\ell}(1) & = & \sum_x  \ell ( x ) \\
\label{prof2}
\widehat{\ell}(2) & = & \sum_{x>0} \ell ( x ) \\
\label{prof3}
\widehat{\ell}(\infty) & = & \sum_x x \cdot \ell (x).
\end{eqnarray}
For illustration, we give a graph of the function $P_3 ( \alpha )$ (that is, the profile of $\llbracket 3 \rrbracket$) in Figure~\ref{fig:3profile}.
\begin{figure}
    \centering
    \includegraphics[scale=0.5]{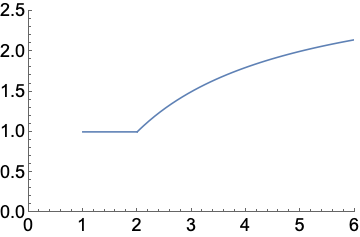}
    \caption{A graph of the function $P_3 ( \alpha )$. (Mathematica)}
    \label{fig:3profile}
\end{figure}

\begin{proposition}
\label{prop:profpos}
A one-dimensional move $\ell$ is valid if and only if $\widehat{\ell}$ is nonnegative and $\widehat{\ell} (1) = 0$.
\end{proposition}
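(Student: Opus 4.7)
The plan is to prove the equivalence by reading off the definition of $\widehat{\ell}$ piecewise over the three ranges $\{1\}$, $[1,2)$, $[2,\infty)$, $\{\infty\}$ that appear in the formula (\ref{singleprofile}) for $P_x$, and matching each piece with one of the three conditions of Definition~\ref{def:valid}.  Equations (\ref{prof1}) and (\ref{prof3}) already do most of the work: they identify $\widehat{\ell}(1) = 0$ with the first validity condition and $\widehat{\ell}(\infty) \geq 0$ with (\ref{valid3}).

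Next I would dispense with the interval $[1,2)$. On it, both branches of $P_x$ give $P_x(\alpha) = 1$ (for $x > 0$ and for $x = 0$), so $\widehat{\ell}$ is constant on $[1, 2)$ with value $\widehat{\ell}(1)$. Hence the hypothesis $\widehat{\ell}(1) = 0$ makes $\widehat{\ell}$ identically zero on $[1,2)$, and conversely nonnegativity of $\widehat{\ell}$ on $[1,2)$ reduces to nonnegativity at the single point $\alpha = 1$.

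The substantive step is the interval $[2, \infty)$, which I would reduce to condition (\ref{valid2}) by the substitution $\lambda = \alpha - 2$. Since $P_0(\alpha) = 0$ on $[2, \infty]$, the $x=0$ term drops out of $\widehat{\ell}$ here, and for $x > 0$ one rewrites
\begin{eqnarray*}
P_x(2 + \lambda) \;=\; \frac{x(\lambda + 1)}{x + \lambda} \;=\; (1 + \lambda)\,\frac{x}{x + \lambda},
\end{eqnarray*}
yielding
\begin{eqnarray*}
\widehat{\ell}(2 + \lambda) \;=\; (1 + \lambda) \sum_{x} \frac{x}{x + \lambda}\, \ell(x)
\end{eqnarray*}
for $\lambda \geq 0$. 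Because $1 + \lambda > 0$, the statement ``$\widehat{\ell}(2+\lambda) \geq 0$ for all $\lambda > 0$'' is literally condition (\ref{valid2}), which matches the interval $(2,\infty)$.

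What remains is the boundary value $\alpha = 2$, which is the only point where a brief continuity argument is needed. Here $\widehat{\ell}(2) = \sum_{x>0} \ell(x) = \lim_{\lambda \to 0^+} \sum_x \frac{x}{x+\lambda} \ell(x)$, where the limit is legitimate because $\ell$ has finite support; so nonnegativity at $\alpha = 2$ follows from (\ref{valid2}) in one direction and is subsumed by global nonnegativity of $\widehat{\ell}$ in the other. Taken together, $\widehat{\ell} \geq 0$ on $[1,\infty]$ plus $\widehat{\ell}(1) = 0$ is equivalent to the three conditions of validity. I do not anticipate any real obstacle: once the substitution $\lambda = \alpha - 2$ is recognized, the proof is a short definition-chase, and the note that (\ref{valid3}) is implied by (\ref{valid2}) corresponds here to the fact that $\widehat{\ell}(\infty)$ is the $\lambda \to \infty$ limit of $\widehat{\ell}(2+\lambda)/(1+\lambda)$.
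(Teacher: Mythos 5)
Your argument is correct and follows essentially the same route as the paper's proof: the substitution $\lambda = \alpha - 2$ together with equations (\ref{prof1}) and (\ref{prof3}) is exactly the paper's observation (the paper just states it more tersely, without spelling out the cases $\alpha \in [1,2)$ and $\alpha = 2$ as you do). One small slip in your closing aside: you write that $\widehat{\ell}(\infty)$ is the $\lambda \to \infty$ limit of $\widehat{\ell}(2+\lambda)/(1+\lambda)$, but that quotient equals $\sum_x \frac{x}{x+\lambda}\ell(x)$ and tends to $0$; the correct statement is that $\widehat{\ell}(\infty) = \lim_{\lambda\to\infty}\widehat{\ell}(2+\lambda)$ (continuity of the profile at $\infty$), or equivalently $\sum_x x\,\ell(x) = \lim_{\lambda\to\infty}\lambda\sum_x \frac{x}{x+\lambda}\ell(x)$, which is what shows (\ref{valid3}) is redundant given (\ref{valid2}). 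This aside is tangential and does not affect the main argument.
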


\begin{proof}
Given Definition~\ref{def:valid}, this follows from equations (\ref{prof1}) and (\ref{prof3}) together with the observation that
\begin{eqnarray}
\sum_x \left( \frac{x}{\lambda + x} \right) \ell (x) \geq 0 & \hskip0.2in \Longleftrightarrow \hskip0.2in & \widehat{\ell} \left( \lambda + 2 \right) \geq 0.
\end{eqnarray}
This completes the proof.
\end{proof}

The reader will observe that there are multiple ways that the one-dimensional profile function could have been defined to achieve the property in Proposition \ref{prop:profpos} (e.g., using different ranges for $\alpha$ and different rational expressions). 
This particular choice of definition will make some of the arguments in section~\ref{sec:commbound} mathematically easier.  One of the reasons that this definition is convenient is that the profile of $\llbracket 1 \rrbracket$ is simply the constant function $\alpha \mapsto 1$.  

Next we define the profile of a two-dimensional move.

\begin{definition}
\label{def:2profile}
For any two-dimensional move $q$, the \textbf{profile of $q$}, denoted $\widehat{q}$, is the function
from $[1, \infty ] \times [1, \infty]$ to $\mathbb{R}$ given by
\begin{eqnarray}
\widehat{q}(\alpha, \beta) = \sum_{x,y} q(x,y) P_x(\alpha) P_y ( \beta).
\end{eqnarray}
\end{definition}

As with the one-dimensional profile, some of the values of a two-dimensional profile $\widehat{q}$ have natural expressions --- for example,
\begin{eqnarray}
\widehat{q} ( 1, \infty ) & = & \sum_{x,y} y \cdot q ( x, y )
\end{eqnarray}
and
\begin{eqnarray}
\widehat{q} ( \infty, \infty ) & = & \sum_{x,y} x\cdot y \cdot q ( x, y ).
\end{eqnarray}

The basic motivation to study the two-dimensional profile function is the following proposition.
\begin{proposition}
\label{prop:horizprof}
If $q$ is a move that is either horizontally valid or vertically valid, then $\widehat{q} \geq 0$.
\end{proposition}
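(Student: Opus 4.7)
The plan is to reduce the two-dimensional statement to the one-dimensional statement (Proposition~\ref{prop:profpos}) by viewing $\widehat{q}(\alpha,\beta)$ as a nonnegative linear combination of the profiles of the rows of $q$ (or, in the vertically valid case, the columns). So I would first establish the easy but essential preliminary lemma that the kernel $P_x$ is itself pointwise nonnegative: for every $x \in \mathbb{R}_{\geq 0}$ and every $\alpha \in [1,\infty]$ we have $P_x(\alpha) \geq 0$. This is a direct inspection of Definition~\ref{def:1profile}: on $[1,2)$ and at $\alpha = \infty$ the values are $1$, $x$, or $0$, all nonnegative; and on $[2,\infty)$ the formula $P_x(\alpha) = x(\alpha-1)/(x+\alpha-2)$ has a positive numerator and positive denominator whenever $x > 0$, and the $x = 0$ case is given directly.

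Next I would reorganize the defining double sum. Writing $q_y$ for the row function $x \mapsto q(x,y)$, one computes
\begin{eqnarray}
\widehat{q}(\alpha,\beta)
& = & \sum_{x,y} q(x,y)\, P_x(\alpha)\, P_y(\beta) \\
& = & \sum_y P_y(\beta) \left( \sum_x q_y(x)\, P_x(\alpha) \right) \\
& = & \sum_y P_y(\beta)\, \widehat{q_y}(\alpha).
\end{eqnarray}
Under the hypothesis that $q$ is horizontally valid, each row $q_y$ is a valid one-dimensional move, and hence $\widehat{q_y}(\alpha) \geq 0$ for every $\alpha \in [1,\infty]$ by Proposition~\ref{prop:profpos}. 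Combined with $P_y(\beta) \geq 0$ from the preliminary step, every summand is nonnegative, which gives $\widehat{q}(\alpha,\beta) \geq 0$ as desired.

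For the vertically valid case I would simply swap the roles of the two coordinates: group the sum as $\sum_x P_x(\alpha)\,\widehat{q^x}(\beta)$, where $q^x$ is the column $y \mapsto q(x,y)$, and apply the same reasoning. Equivalently, one can observe that $\widehat{q^\top}(\alpha,\beta) = \widehat{q}(\beta,\alpha)$, so the vertically valid case follows from the horizontally valid case applied to $q^\top$.

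There is no real obstacle here; the statement is essentially a repackaging of Proposition~\ref{prop:profpos} under the tensor product structure of Definition~\ref{def:2profile}. The only point one has to be a bit careful about is making sure $P_x$ really is nonnegative on all of $[1,\infty]$ (not just where the validity inequalities of Definition~\ref{def:valid} are invoked), which is why I would state that as a separate preliminary observation rather than folding it silently into the computation.
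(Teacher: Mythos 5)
Your argument is correct and is essentially the paper's own proof: reorganize the double sum so that the inner sum over $x$ is the one-dimensional profile $\widehat{q_y}(\alpha)$ of each row, observe that these are nonnegative by Proposition~\ref{prop:profpos}, and multiply by the nonnegative weights $P_y(\beta)$. Your explicit preliminary check that $P_x \geq 0$ on all of $[1,\infty]$ is a point the paper uses silently, so spelling it out is a reasonable addition rather than a deviation.
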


\begin{proof}
Suppose that $q$ is horizontally valid.  Then, the single-variable profiles of the rows of $q$ are all nonnegative.  Thus for any
$\alpha, \beta \in [1,\infty]$,
\begin{eqnarray}
\widehat{q}(\alpha, \beta) & = & \sum_y \left( \sum_x q(x,y) P_x(\alpha) \right) P_y ( \beta) \\
& \geq & 0.
\end{eqnarray}
The vertically valid case is similar.
\end{proof}
As a consequence of the above proposition, if $M = (m_1, \ldots, m_n)$ is a valid time-dependent point game and 
$z_0, z_1, \ldots, z_{n-1}, z_n$ are its configurations, we must have
\begin{eqnarray}
\widehat{z_0} ( \alpha, \beta ) \leq \widehat{z_1} ( \alpha, \beta ) \leq \ldots \leq \widehat{z_n} ( \alpha , \beta )
\end{eqnarray}
for any $\alpha, \beta \in [1, \infty]$.
The profile function gives us an infinite family of constraints that must be satisfied by the initial and final configurations of any valid point game.

We make note of some additional elementary facts for later use.
\begin{proposition}
\label{prop:horiz1}
Let $q$ be a horizontally valid move.  If $\alpha \in [1,2)$,
then $\widehat{q} ( \alpha , \beta) = 0$.
\end{proposition}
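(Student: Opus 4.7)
The plan is to exploit the fact that on the interval $[1,2)$ the one-dimensional profile $P_x$ is identically $1$, regardless of the value of $x \in \mathbb{R}_{\geq 0}$. Indeed, by Definition~\ref{def:1profile}, for $x > 0$ the first case gives $P_x(\alpha) = 1$ for $\alpha \in [1,2)$, and for $x = 0$ the same conclusion holds by the first case of the definition of $P_0$. So whenever $\alpha \in [1,2)$, every factor $P_x(\alpha)$ appearing in the sum defining $\widehat{q}(\alpha,\beta)$ collapses to $1$.

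Using this, I would unpack the definition of $\widehat{q}$ and rearrange the double sum, interchanging the order of summation (which is legal since $q$ has finite support):
\begin{eqnarray}
\widehat{q}(\alpha,\beta) & = & \sum_{x,y} q(x,y)\, P_x(\alpha)\, P_y(\beta) \\
& = & \sum_y P_y(\beta) \left( \sum_x q(x,y) \right).
\end{eqnarray}
The inner sum is exactly the zeroth condition of validity (Definition~\ref{def:valid}) applied to the row $x \mapsto q(x,y)$. Since $q$ is horizontally valid, every such row is a valid one-dimensional move, and hence $\sum_x q(x,y) = 0$ for every $y$. Therefore each term in the outer sum vanishes, giving $\widehat{q}(\alpha,\beta) = 0$.

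There is no real obstacle here; the proof is essentially a one-line consequence of the way $P_x$ was defined to be constant on $[1,2)$ together with the first of the three defining equations for a valid one-dimensional move. This is presumably exactly why $[1,2)$ was included as a separate case in Definition~\ref{def:1profile}: it packages the condition $\sum_x \ell(x) = 0$ as the statement $\widehat{\ell}|_{[1,2)} = 0$, and the present proposition is the two-dimensional shadow of that design choice.
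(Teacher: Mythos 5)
Your proof is correct and matches the paper's: the paper's one-line computation $\widehat{q}(\alpha,\beta) = \sum_y \left[\left(\sum_x q(x,y)\right) P_y(\beta)\right] = 0$ implicitly uses exactly the two facts you spell out, namely that $P_x(\alpha) = 1$ for $\alpha \in [1,2)$ and that each row of a horizontally valid move sums to zero. You simply made the intermediate reasoning explicit.
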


\begin{proof}
We have
\begin{eqnarray}
\widehat{q} ( \alpha, \beta ) & = &
\sum_y \left[ \left( \sum_{x} q (x, y) \right) P_y ( \beta ) \right] \\
& = & 0,
\end{eqnarray}
as desired.
\end{proof}

\begin{proposition}
\label{prop:horiz2}
Let $q$ be a horizontally valid move.  Then, $\widehat{q} ( \alpha, 1) \geq \widehat{q} ( \alpha , 2 )$ for any $\alpha \in [1, \infty]$.
\end{proposition}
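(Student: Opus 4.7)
The plan is to rewrite $\widehat{q}(\alpha, 1) - \widehat{q}(\alpha, 2)$ as the one-dimensional profile of a single row of $q$, and then invoke Proposition~\ref{prop:profpos}. First I would evaluate $P_y$ at $\beta = 1$ and $\beta = 2$ using Definition~\ref{def:1profile}. Since $1 \in [1,2)$, the first branch of the formula (and the corresponding branch of $P_0$) gives $P_y(1) = 1$ for every $y \in \mathbb{R}_{\geq 0}$. At $\beta = 2$, the second branch applies: for $y > 0$, $P_y(2) = (2y - y)/(y + 2 - 2) = 1$, while by the separate definition $P_0(2) = 0$. Hence, as a function of $y$, $P_y(1) - P_y(2)$ is supported only at $y = 0$, where it equals $1$.

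Substituting into Definition~\ref{def:2profile} then collapses the $y$-sum to a single term:
\begin{eqnarray*}
\widehat{q}(\alpha,1) - \widehat{q}(\alpha,2) & = & \sum_{x,y} q(x,y)\, P_x(\alpha)\, \bigl( P_y(1) - P_y(2) \bigr) \\
& = & \sum_x q(x,0)\, P_x(\alpha) \\
& = & \widehat{\ell}(\alpha),
\end{eqnarray*}
where $\ell \colon \mathbb{R}_{\geq 0} \to \mathbb{R}$ is the row $\ell(x) := q(x, 0)$ of $q$. Because $q$ is horizontally valid, each of its rows (including the one at $y = 0$) is a valid one-dimensional move, so Proposition~\ref{prop:profpos} yields $\widehat{\ell} \geq 0$ on $[1, \infty]$. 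This gives the desired inequality for every $\alpha \in [1, \infty]$.

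I do not anticipate any real obstacle: the argument is a direct computation that exploits the fact that the profile function $P_y$ was specifically designed so that $P_y(1) = 1$ encodes the ``sum to zero'' defining condition for a valid one-dimensional move and $P_y(2)$ encodes the ``support away from zero'' reduction. The case $\alpha = \infty$ requires no separate treatment, since $\alpha$ is held fixed throughout and only $\beta$ is specialized.
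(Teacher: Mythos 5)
Your proof is correct and takes essentially the same approach as the paper's: both rewrite $\widehat{q}(\alpha,1) - \widehat{q}(\alpha,2)$ as $\sum_y \bigl(\sum_x q(x,y) P_x(\alpha)\bigr)\bigl(P_y(1)-P_y(2)\bigr)$ and then invoke Proposition~\ref{prop:profpos} for the nonnegativity of row profiles. Your version is slightly sharper: you compute $P_y(1)-P_y(2)$ exactly (zero for $y>0$, one for $y=0$), identifying the difference as the profile of the $y=0$ row, whereas the paper only uses the inequality $P_y(1)\geq P_y(2)$.
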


\begin{proof}
By construction, $P_y ( 1 ) \geq P_y ( 2)$ for any $y$.  We have
\begin{eqnarray}
\widehat{q} ( \alpha, 1 ) - \widehat{q} ( \alpha, 2 ) & = &
\sum_y \left[ \left( \sum_{x} q (x, y) P_x ( \alpha ) \right) \left( P_y (1) - P_y ( 2 ) \right) \right].
\end{eqnarray}
Both factors in the summand enclosed by brackets above are nonnegative, and thus the result follows.
\end{proof}

\subsection{The target profiles}

\label{subsec:targ}

From subsection~\ref{subsec:relationship}, we know that if there exists a quantum weak coin-flipping protocol whose bias is less than $\epsilon$ (with $0 < \epsilon < \frac{1}{2}$), then there must exist a valid TIPG $(m_1, m_2)$ such that $m_1 + m_2$ is equal to 
\begin{eqnarray}
v_\epsilon & := & 
\Bigl\llbracket \frac{1}{2} + \epsilon, \frac{1}{2} + \epsilon \Bigr\rrbracket - \frac{1}{2} \llbracket 1,0 \rrbracket
- \frac{1}{2} \llbracket 0,1 \rrbracket.
\end{eqnarray}
We therefore have a crucial interest in the moves $v_\epsilon$.
However, it is mathematically simpler to instead study the family of moves
\begin{eqnarray}
\label{taumove}
t_{\tau} & := & 2 \cdot \llbracket
1,1\rrbracket - \llbracket 2 - \tau, 0 \rrbracket - \llbracket 0, 2 - \tau \rrbracket
\end{eqnarray}
for $0 < \tau < 1$.  Note that if $\tau = 4 \epsilon/(1 + 2 \epsilon)$, then
\begin{eqnarray}
v_\epsilon ( x, y) & = &
\frac{1}{2} t_\tau ( (2-\tau) x, (2-\tau) y),
\end{eqnarray}
and so (using Proposition~\ref{prop:stretch}), valid TIPGs for $v_\epsilon$ correspond exactly to valid TIPGs
for $t_\tau$ via the same linear transformation.

If $R = (r_1, r_2)$ is a valid TIPG such that $r_1 + r_2 = t_\tau$, then we must have $\widehat{r_1} + \widehat{r_2} = \widehat{t_\tau}$.  
The profile function $\widehat{t_\tau}$ can be expressed as follows.
\begin{eqnarray}
\label{tauexplicit}
\widehat{t_\tau} ( \alpha, \beta ) & = & \left\{ \begin{array}{cl} 
2 & \textnormal{ if } \alpha, \beta \geq 2 \\ \\
2 - P_{2-\tau} ( \alpha)  & \textnormal{ if } \alpha \geq 2, \beta < 2 \\ \\
2 - P_{2-\tau} ( \beta) & \textnormal{ if } \alpha < 2, \beta \geq 2 \\ \\
0 & \textnormal{ if } \alpha, \beta < 2 
\end{array}
 \right.
\end{eqnarray}
A graph of an example (with $\tau = 1/10$) is given in Figure~\ref{fig:tenthprofile}.
\begin{figure}
    \centering
    \includegraphics[scale=0.5]{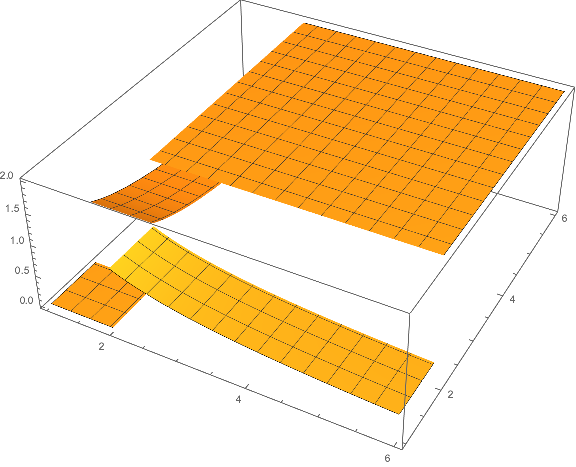}
    \caption{A graph of the function $\widehat{t_{1/10}}$. (Mathematica)}
    \label{fig:tenthprofile}
\end{figure}

\section{Highly concentrated rational functions}

\label{sec:highly}

This section adapts known complex analysis techniques to prove a result that will be needed in section~\ref{sec:commbound}.\footnote{See sections III.1--III.2 in \cite{Nevanlinna:1970} for a more general discussion of some of the techniques used in the current section.}
I am grateful to Alexandre Eremenko for showing me the central method used in this section.

We will be concerned with rational functions on an interval $[a,b]$ that are highly concentrated --- that is, rational functions that are significantly large at some interior point $c \in [ a, b ]$ and are more tightly bounded outside of a small neighborhood of $c$.  Our interest (eventually) will be in studying  deductions that we can make when such a function occurs in the profile of a move.

\subsection{Preliminaries}

We will first reproduce a standard result about the logarithm of the absolute value of an analytic function.
We begin with the following observation: if $f$ is an analytic function on a neighborhood of $\overline{\mathbb{D}}$ which has no zeroes in $\overline{\mathbb{D}}$, then there is a well-defined analytic function $\log f$ on a neighborhood of $\overline{\mathbb{D}}$ such that $2^{\log f} = f$.  We have (see subsection~\ref{subsec:complex}):
\begin{eqnarray}
\frac{1}{2 \pi} \int_{0}^{2 \pi} \log f ( e^{i t} ) dt & = & \log f ( 0 ).
\end{eqnarray}
Since the real part of $\log f(z)$ is precisely $  \log \left|  f(z) \right|$, this proves the following.
\begin{proposition}
\label{prop:concprelim}
Let $f$ be an analytic function on an open neighborhood of $\overline{\mathbb{D}}$ such that $f$ has no zeroes in $\overline{\mathbb{D}}$.  Then,
\begin{eqnarray}
\log \left| f ( 0 ) \right| & = & \frac{1}{2 \pi} \int_{0}^{2 \pi} \log \left| f ( e^{i t} ) \right| dt. \qed
\end{eqnarray}
\end{proposition}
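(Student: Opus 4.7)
The proposition is the classical mean value property for $\log|f|$, and the excerpt essentially hands us the outline in the paragraph just above the statement. My plan is just to assemble that outline carefully, with explicit justification for the one nontrivial move (constructing an analytic branch of $\log f$).

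First I would verify that a single-valued analytic branch of $\log f$ exists on a neighborhood of $\overline{\mathbb{D}}$. Since $f$ is analytic and nonvanishing on an open set $S$ containing $\overline{\mathbb{D}}$, we can shrink $S$ to a simply connected open neighborhood $S'$ of $\overline{\mathbb{D}}$ (for instance, an open disc $\mathbb{D}(0, 1 + \varepsilon)$ for small $\varepsilon > 0$) on which $f$ is still nonvanishing. On a simply connected domain, any nonvanishing analytic function has an analytic logarithm, so there exists analytic $g \colon S' \to \mathbb{C}$ with $e^{g(z) \ln 2} = f(z)$, i.e.\ $2^{g(z)} = f(z)$. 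Writing $\log f := g$, we have $\Re(\log f(z)) = \log_2 |f(z)|$ pointwise on $S'$, which matches the paper's convention that $\log$ denotes the base-$2$ logarithm.

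Next I would apply the mean value identity stated at the end of subsection~\ref{subsec:complex} to the analytic function $\log f$ with center $z = 0$ and radius $r = 1$; this is legal because $\overline{\mathbb{D}}(0, 1) = \overline{\mathbb{D}} \subseteq S'$. That identity gives
\begin{eqnarray}
\log f(0) & = & \frac{1}{2\pi} \int_0^{2\pi} \log f(e^{it}) \, dt.
\end{eqnarray}
Finally I would take real parts on both sides. The real part commutes with the integral (since the integrand is a continuous $\mathbb{C}$-valued function of $t$), and $\Re \log f(z) = \log |f(z)|$, yielding
\begin{eqnarray}
\log |f(0)| & = & \frac{1}{2\pi} \int_0^{2\pi} \log |f(e^{it})| \, dt,
\end{eqnarray}
which is the claimed formula.

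The only step that is not pure bookkeeping is the construction of the branch of $\log f$, and even that is a textbook application of simple connectedness plus the nonvanishing hypothesis. There is no real obstacle here; the interest of the proposition lies in how it will be used in section~\ref{sec:highly} to bound highly concentrated rational functions, not in the proof itself.
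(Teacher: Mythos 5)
Your proposal is correct and follows essentially the same route as the paper: construct an analytic branch of $\log f$ on a neighborhood of $\overline{\mathbb{D}}$, apply the mean value identity from subsection~\ref{subsec:complex}, and take real parts. The only difference is that you spell out the simple-connectedness justification for the branch, which the paper leaves implicit.
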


For the result that we will prove in subsection~\ref{subsec:highly}, we will need a similar statement that addresses the case where $f$ is permitted to have zeroes in $\overline{\mathbb{D}}$ and may not be analytic on $\mathbb{S}$.  
This motivates a somewhat more intricate claim.
Let us say that a real-valued function on the unit circle $\mathbb{S}$ is a \textbf{step function} if it is locally constant at all but a finite number of points in $\mathbb{S}$.  
A proof of the following proposition is given in Appendix~\ref{app:logbound}.  
\begin{proposition}
\label{prop:conc}
Let $f$ be a continuous function on $\overline{\mathbb{D}}$ which is analytic on $\mathbb{D}$.  
Suppose that
$b \colon \mathbb{S} \to \mathbb{R}$ is a step function such that
$\log \left| f ( z ) \right|  \leq b ( z )$
for any $z \in \mathbb{S}$. Then,
\begin{eqnarray}
\log \left| f ( 0 ) \right| & \leq & \frac{1}{2 \pi} \int_{0}^{2 \pi} b ( e^{i t} ) dt. \qed
\end{eqnarray}
\end{proposition}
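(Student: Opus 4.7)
My plan is to reduce Proposition~\ref{prop:conc} to Proposition~\ref{prop:concprelim} through two successive modifications of $f$: a radial dilation to gain analyticity on an open neighborhood of $\overline{\mathbb{D}}$, followed by a Blaschke-product division to remove the zeros in the disc, and then a limiting argument as the dilation parameter tends to $1$. The trivial cases are immediate: if $f(0) = 0$ (including the case $f \equiv 0$), then the left-hand side is $-\infty$ while the right-hand side is finite (since the step function $b$ takes only finitely many real values), so the inequality holds automatically. I therefore assume throughout that $f(0) \neq 0$.

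Fix $r \in (0,1)$ such that $f$ has no zero of modulus exactly $r$. Since $f \not\equiv 0$ is analytic on $\mathbb{D}$, its zeros there form a discrete subset, so the set of moduli of zeros is countable and almost every $r \in (0,1)$ is admissible. The function $f_r(z) := f(rz)$ is then analytic on the open disc of radius $1/r > 1$, and its zeros in $\overline{\mathbb{D}}$ correspond to zeros of $f$ inside the compact set $\overline{\mathbb{D}(0,r)} \subset \mathbb{D}$, so there are only finitely many, none on $\mathbb{S}$; enumerate them with multiplicity as $\zeta_1, \ldots, \zeta_m \in \mathbb{D}$. Define
\[
g(z) \;:=\; f_r(z) \prod_{j=1}^{m} \frac{1 - \overline{\zeta_j}\, z}{z - \zeta_j}.
\]
Each Blaschke factor $(z - \zeta_j)/(1 - \overline{\zeta_j}z)$ has modulus $1$ on $\mathbb{S}$ and modulus strictly less than $1$ in $\mathbb{D}$, with its reciprocal having a pole at $1/\overline{\zeta_j}$ that lies outside $\overline{\mathbb{D}}$; the zero of $f_r$ at each $\zeta_j$ cancels the pole of the reciprocal Blaschke factor there, so $g$ extends to an analytic and nonvanishing function on an open neighborhood of $\overline{\mathbb{D}}$. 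Moreover $|g(e^{it})| = |f(re^{it})|$ for all $t$, and $|g(0)| = |f(0)| \prod_j |\zeta_j|^{-1} \geq |f(0)|$. Applying Proposition~\ref{prop:concprelim} to $g$ therefore yields
\[
\log|f(0)| \;\leq\; \log|g(0)| \;=\; \frac{1}{2\pi} \int_0^{2\pi} \log|f(re^{it})|\,dt.
\]

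Finally, let $r$ tend to $1$ along a sequence of admissible values. Since $f$ is continuous on the compact disc $\overline{\mathbb{D}}$, the quantity $M := \sup_{\overline{\mathbb{D}}}|f|$ is finite, hence $\log|f(re^{it})| \leq \log M$ uniformly in $r$ and $t$. This constant upper bound permits an application of the reverse Fatou lemma, which, combined with the pointwise limit $\log|f(re^{it})| \to \log|f(e^{it})|$ (which follows from continuity of $|f|$, with $\log 0$ interpreted as $-\infty$), gives
\[
\limsup_{r \to 1^-} \frac{1}{2\pi}\int_0^{2\pi} \log|f(re^{it})|\,dt \;\leq\; \frac{1}{2\pi}\int_0^{2\pi} \log|f(e^{it})|\,dt \;\leq\; \frac{1}{2\pi}\int_0^{2\pi} b(e^{it})\,dt,
\]
where the last inequality uses the hypothesis of the proposition. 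Chaining with the previous display completes the proof. The main technical obstacle is that $f$ is only continuous (not analytic) on the boundary circle and that its zeros may accumulate on $\mathbb{S}$, which is precisely what prevents a direct application of Proposition~\ref{prop:concprelim} and necessitates the indirect route through $f_r$ with a Blaschke correction together with the $r \to 1^-$ passage; once these are in place, the remaining steps are routine.
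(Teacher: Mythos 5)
Your proof is correct and follows the same overall plan as the paper's: dilate $f$ radially to gain analyticity past the boundary, strip the interior zeros so that the mean-value identity (Proposition~\ref{prop:concprelim}) applies, and then let the dilation parameter tend to $1$. The zero-removal step is carried out via a Blaschke-product division, while the paper's Proposition~\ref{prop:concprelim2} factors $f = f_1 f_2$ with $f_1$ a polynomial vanishing in $\mathbb{D}$; these are the same computation in different clothing, since the paper's ``direct computation'' for $f_1$ is exactly the Blaschke mean-value identity $\tfrac{1}{2\pi}\int_0^{2\pi}\log|e^{it}-c|\,dt = 0$ for $|c|<1$. The one genuinely different ingredient is the limiting step: the paper uses uniform continuity of $f$ on $\overline{\mathbb{D}}$ to replace $|f((1-\zeta)e^{it})|$ by $|f(e^{it})|+\delta \le 2^{b(e^{it})}+\delta$ and then sends $\delta \to 0$, relying on $b$ being a step function to pass the limit through the integral, whereas you invoke the reverse Fatou lemma under the uniform upper bound $\log M$. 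Your route is a bit cleaner and would in fact work for any measurable $b$ with $\int b < \infty$, so the step-function hypothesis plays no essential role for you beyond guaranteeing finiteness of the right-hand side. One small slip in the write-up: the reciprocal Blaschke factor $(1-\overline{\zeta_j}z)/(z-\zeta_j)$ has its pole at $\zeta_j$, not at $1/\overline{\zeta_j}$ (the latter is the pole of the Blaschke factor itself, equivalently the zero of the reciprocal); your subsequent cancellation argument nonetheless treats this correctly.
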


\subsection{The complex values of a highly concentrated rational function}

\label{subsec:highly}

We will now prove the main result of this section.  This result is concerned with rational functions that have real poles --- that is, functions $f \colon \mathbb{C} \cup \left\{ \infty \right\} \to \mathbb{C} \cup \left\{ \infty \right\}$ that can be expressed in the form
\begin{eqnarray}
f(z) & = & \frac{g(z)}{\prod_{i=1}^n (z-c_i)},
\end{eqnarray}
where $g(z)$ is a polynomial and $c_i \in \mathbb{R}$.  We will show that under certain assumptions, these functions must take on large
values on the complex unit circle.

Recall that if $A$ and $B$ are sets and $B \subseteq A$,  then we write $A \smallsetminus B$ for the set of all elements in $A$ that are not in $B$.

\begin{proposition}
\label{prop:complexlarge}
Let $f(z)$ be a rational function whose poles are all real and lie outside of $[-1,1]$.  
Suppose that
\begin{eqnarray}
\left| f ( 0 ) \right| = 1,
\end{eqnarray}
and suppose that  $\delta, \nu \in (0,1)$ are such that 
\begin{eqnarray}
\left| f (z ) \right| \leq \nu && \hskip0.2in \textit{for all } z \in [-1,1] \smallsetminus (-\delta, \delta).
\end{eqnarray}
Then,
\begin{eqnarray}
\label{clconclusion}
\max_{\left| z \right| = 1} \left| f (z ) \right| & \geq & \nu^{-\Omega ( 1/\delta )}.
\end{eqnarray}
\end{proposition}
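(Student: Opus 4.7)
The plan is to reduce to Proposition~\ref{prop:conc} via a conformal map, leaving only an explicit harmonic-measure estimate. Since the poles of $f$ are real and lie outside $[-1,1]$, $f$ is analytic on a neighborhood of $\overline{\mathbb{D}}$. Consider the slit disk
\[ \Omega \;:=\; \mathbb{D} \setminus \bigl([-1,-\delta]\cup[\delta,1]\bigr). \]
This is a bounded simply connected domain containing $0$ in its interior, with locally connected boundary, so the Riemann mapping theorem together with Carath\'eodory's theorem yields a conformal bijection $\phi\colon \mathbb{D}\to \Omega$ with $\phi(0)=0$ extending continuously to $\overline{\mathbb{D}}$. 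Set $F := f\circ\phi$. Then $F$ is analytic on $\mathbb{D}$, continuous on $\overline{\mathbb{D}}$, and $|F(0)|=1$. For each $e^{it}\in\mathbb{S}$, the image $\phi(e^{it})$ lies either on the outer circle (where $|F|\leq M:=\max_{|z|=1}|f(z)|$) or on one of the slits (where $|F|\leq \nu$), and these preimages partition $\mathbb{S}$ into finitely many arcs. Setting $\mu$ equal to the fraction of $\mathbb{S}$ (by Lebesgue measure) sent by $\phi$ into the outer circle, Proposition~\ref{prop:conc} applied to the resulting step-function bound gives
\begin{eqnarray}
0 \;=\; \log |F(0)| & \leq & \mu \log M + (1-\mu)\log\nu,
\end{eqnarray}
which rearranges to $\log M \geq \frac{1-\mu}{\mu}\log(1/\nu)$. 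The quantity $\mu$ is exactly the harmonic measure $\omega(0,\mathbb{S},\Omega)$, so the proposition reduces to the geometric estimate $\mu = O(\delta)$.

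For this estimate I would first exploit the symmetry of $\Omega$ under reflection through the real axis: by Schwarz reflection, $\mu$ equals the value at $0$ of a mixed boundary-value problem on the upper half-disk $\mathbb{D}\cap\mathbb{H}$, with Dirichlet value $1$ on the upper semicircle $\mathbb{S}^+$, Dirichlet value $0$ on the upper sides of the slits, and Neumann data on the interval $(-\delta,\delta)$. Then apply the conformal map $h(z)=\arccos(z/\delta)$, with the branch that takes $(-\delta,\delta)$ bijectively onto $(0,\pi)\subset \mathbb{R}$. This sends the upper half-disk onto an almost-rectangular curvilinear region $R \subset \{w:0<\Re w<\pi,\;\Im w<0\}$: the Neumann interval $(-\delta,\delta)$ becomes the top edge $(0,\pi)$; the slits $[\delta,1]$ and $[-1,-\delta]$ become two vertical Dirichlet edges at $\Re w\in\{0,\pi\}$ of length $L := \cosh^{-1}(1/\delta) = \log(2/\delta) + O(\delta^2)$; and $\mathbb{S}^+$ becomes a nearly horizontal Dirichlet curve lying within $O(\delta^2)$ of $\{\Im w = -L\}$ (as can be verified by direct expansion of $h(e^{i\theta})$ for small $\delta$). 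Moreover $h(0)=\pi/2$ lies on the top edge.

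In the true rectangle $R_0 := (0,\pi)\times(-L,0)$ approximating $R$, a standard separation-of-variables computation with eigenfunctions $\sin(nx)\cosh(ny)$ evaluates the corresponding mixed harmonic function (value $1$ on the bottom, $0$ on the left/right sides, Neumann on top) at $(\pi/2,0)$:
\[
\sum_{k\geq 0}\frac{4(-1)^k}{\pi(2k+1)\cosh((2k+1)L)} \;=\; \frac{8}{\pi}\,e^{-L} + O(e^{-3L}) \;=\; O(\delta).
\]
The main obstacle is transferring this $R_0$-estimate to the curvilinear image $R$. Since the bottom edge of $R$ lies within $O(\delta^2)$ of $\{\Im w = -L\}$, this should follow from a routine comparison --- for instance, sandwiching $R$ between true rectangles of heights $L\pm O(\delta^2)$ and invoking domain monotonicity for the mixed Dirichlet/Neumann problem --- giving $\mu = O(\delta)$ and hence $\log M \geq \Omega(1/\delta)\log(1/\nu)$, which is the claimed bound.
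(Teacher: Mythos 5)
Your overall reduction is exactly the one used in the paper: the function $H = G^{-1}\circ F\circ G$ constructed in the paper's proof \emph{is} the Riemann map $\phi\colon \mathbb{D}\to\Omega$ onto the slit disk (normalized by $H(0)=0$), and both arguments funnel everything through Proposition~\ref{prop:conc} applied to $f\circ\phi$, reducing the claim to the harmonic-measure estimate $\omega(0,\mathbb{S},\Omega)=O(\delta)$. Where you part ways is in how that estimate is obtained. The paper never invokes the Riemann mapping theorem abstractly: because $G$, $F$, $H$ are written out in closed form, the preimages of the slit endpoints on $\mathbb{S}$ are the four unit-length numbers $\pm e^{\pm i\theta}$ with $\theta = \arg\bigl((i+\delta)/(1+i\delta)\bigr)$, and $\omega(0,\mathbb{S},\Omega)=1-2\theta/\pi$ is read off directly; the whole estimate is then a two-line Taylor expansion showing $\theta = \pi/2 - O(\delta)$. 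You instead take $\phi$ as a black box and estimate $\omega$ by a separate chain: Schwarz reflection to the upper half-disk, the $\arccos(z/\delta)$ change of variables, a Fourier-series solution of the mixed Dirichlet/Neumann problem on the model rectangle, and a domain-comparison step to transfer from the rectangle to the curvilinear image. That chain does appear to close --- the near-rectangle estimate $\Im h(e^{i\vartheta}) = -L + O(\delta^2)$ is right, the series at $(\pi/2,0)$ evaluates to $\Theta(e^{-L})=\Theta(\delta)$, and the comparison principle for the mixed problem (via Hopf's lemma at the Neumann edge) does give the needed monotonicity --- but the final ``routine comparison'' you wave at is exactly where the real work lies, and as stated it is a sketch rather than a proof. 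The net effect is that you spend considerable effort re-deriving, via boundary-value-problem machinery, a quantity that the paper obtains for free from the explicit formula for $H$; if you want to keep your route, you should make the domain-monotonicity argument precise (and justify it at the Neumann edge and the corner points $\pm\delta$), but it would be simpler to adopt the paper's explicit map.
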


We illustrate the statement above with an example.
Let
\begin{eqnarray}
\label{econcfunction}
h( z ) & = & \left[ \frac{4(z-1)(z+1)}{(z-2)(z+2)} \right]^{100}.
\end{eqnarray}
This function, which is graphed in Figure~\ref{fig:conc}, satisfies $\left| h ( 0 ) \right| = 1$ and 
\begin{eqnarray}
\left| h (z ) \right| \leq 0.1 && \hskip0.2in \textnormal{for all } z \in [-1,1] \smallsetminus (-0.2,0.2).
\end{eqnarray}
The quantity  $\max_{\left| z \right| = 1 } \left|  h ( z ) \right|$ in this case is indeed quite large (on the order of $10^{20}$).

\begin{figure}
    \centering
    \includegraphics[scale=0.4]{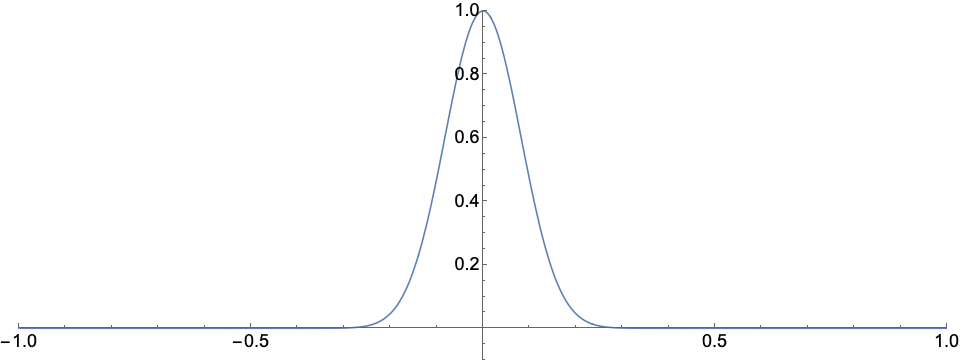}
    \caption{A graph over the interval $[-1,1]$ of the example function in equation~(\ref{econcfunction}). }
    \label{fig:conc}
\end{figure}

\begin{proof}[Proof of Proposition~\ref{prop:complexlarge}]
Our approach is to apply an analytic transformation 
which maps the unit circle $\mathbb{S}$ to the set
\begin{eqnarray}
\mathbb{S} \cup
\left( [-1, 1] \smallsetminus (-\delta, \delta) \right)
\end{eqnarray}
and to thereby reduce the proof to an application of Proposition~\ref{prop:conc} above.  

Let 
$G \colon \overline{\mathbb{D}} \to \overline{\mathbb{H}}$
be defined by
\begin{eqnarray}
\label{gdef}
G ( z ) & = & i \cdot \frac{1 + z}{1-z}.
\end{eqnarray}
Note that this function is a one-to-one mapping.  Its inverse is given by
\begin{eqnarray}
G^{-1} ( z ) & = & \frac{z - i}{z + i}.
\end{eqnarray}
Let $F \colon \overline{\mathbb{H}} \to \overline{\mathbb{H}}$
be the continuous function\footnote{The rational function $z \mapsto  \left( \frac{
z^2  \left| G ( \delta) \right|^2 - 1 }{ \left| G ( \delta) \right|^2 - z^2 } \right)$
on $\overline{\mathbb{H}}$ has two continuous square roots.  We let $F$ be the square root  which maps $\overline{\mathbb{H}}$ into itself.} 
\begin{eqnarray}
\label{fdef}
F (z) & = & \sqrt{ \frac{
z^2  \left| G ( \delta) \right|^2 - 1 }{ \left| G ( \delta) \right|^2 - z^2 }}.
\end{eqnarray}
The function $F$ maps $\infty$ to $G ( \delta)$, maps $0$ to 
$G ( - \delta ) = 1 / G ( \delta)$, and maps $i$ to $i$.  (Diagrams of $F$ and $G$ are given in Appendix~\ref{app:adiag}.)
The image of $\mathbb{R} \cup \{ \infty \}$ under $F$ is the union of $\mathbb{R} \cup \{ \infty \}$ together with the line segment from $G ( \delta )$ to $\infty$, and the line segment from $G ( - \delta )$ to $0$.  

Now let $H$ be  defined by
\begin{eqnarray}
H ( z ) = G^{-1} ( F ( G ( z ) ) ).
\end{eqnarray}
Then,
\begin{eqnarray}
H ( 0 ) & = & 0 \\
H ( 1 ) & = & \delta \\
H ( -1) & = & -\delta.
\end{eqnarray}
The image of the unit circle under $H$ consists of the unit circle, the line segment from $-1$ to $-\delta$, and the line segment from $\delta$ to $1$.
Additionally, if we let $\theta \in [0, \pi/2]$ denote the angle of the unit-length complex number
\begin{eqnarray}
\label{deltafrac}
\frac{i + \delta}{1 + i \delta},
\end{eqnarray}
then the following hold by direct computation:
\begin{eqnarray}
& H \left( e^{i \theta} \right) = 
H \left( e^{- i \theta} \right) = 1 \\
& H \left( - e^{- i \theta} 
\right)
= 
H \left( - e^{i \theta} \right) = -1 .
\end{eqnarray}
The points on the unit circle which lie clockwise between $e^{-i \theta}$ and $e^{i \theta}$ are mapped into the real interval $[\delta, 1]$, and the points which lie clockwise between $-e^{-i \theta}$ and $-e^{i \theta}$ are mapped into the real interval $[-1, -\delta]$.  
All other points on the unit circle remain on the unit circle under the application of $H$.  A diagram of $H$ is given in
Figure~\ref{themaph}.
\begin{figure}
    \centering
\fbox{\includegraphics[scale=0.3]{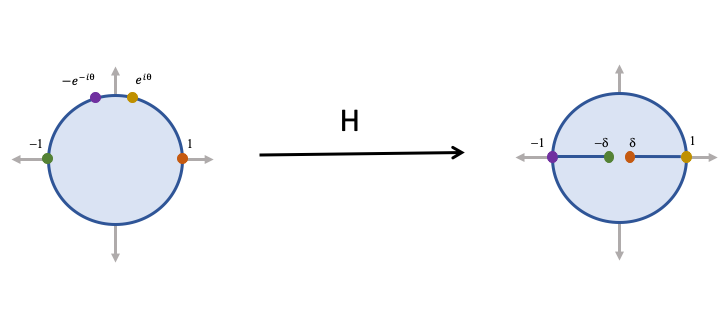}}
    \caption{A diagram of the map $H$.}
\label{themaph}
\end{figure}

We now compute upper bounds on
the function $z \mapsto \left| f ( H (z)) \right|$.  Let $M = \max_{\left| z \right| = 1} \left| f(z) \right|$. Then, by our construction, \begin{eqnarray}
\log \left| f ( H ( e^{i \chi} )) \right| \leq \log M & & \textnormal{ if } \chi \in (\theta, \pi - \theta) \cup (\pi + \theta,  2\pi - \theta), \\
\log \left| f ( H ( e^{i \chi} )) \right| \leq \log \nu & & \textnormal{ if } \chi \in [-\theta, \theta] \cup [\pi - \theta, \pi + \theta]. 
\end{eqnarray}
 Let $T \colon [0, 2 \pi] \to \mathbb{R}$ be defined by 
\begin{eqnarray}
T ( \chi ) & = & \left\{ \begin{array}{ccl} 
\log M & & \textnormal{ if } 
\chi \in (\theta, \pi - \theta) \cup (\pi + \theta,  2\pi - \theta) \\ \\
\log \nu && \textnormal{ otherwise.}
\end{array} \right.
\end{eqnarray}
Then, applying Proposition~\ref{prop:conc},
\begin{eqnarray}
\log \left| f ( 0 ) \right| & = & \log \left| f ( H ( 0 ) ) \right| \\ & \leq & \frac{1}{2 \pi} \int_0^{2 \pi} 
T ( s) d s \\
& = &
\left( 1 - \frac{2 \theta}{\pi} \right) \log M + \left(  \frac{2 \theta}{\pi} \right) \log \nu.
\end{eqnarray}
To complete the proof, it suffices to note that the quantity (\ref{deltafrac}), which was used to define $\theta$, is within distance $O ( \delta )$ from the complex number $i$.  Therefore, $\theta$ itself is within distance  $O ( \delta )$ from $\pi/2$.  Thus we obtain 
\begin{eqnarray}
\log \left| f ( 0 ) \right| & \leq &
O ( \delta ) (\log M )  + (1 - O ( \delta ) ) \log \nu.
\end{eqnarray}
Since $\log \left| f ( 0 ) \right| = 0$ by assumption, we therefore have
\begin{eqnarray}
\frac{- \log \nu}{O ( \delta )} & \leq & \log M,
\end{eqnarray}
which yields the desired result.
\end{proof}

We give a brief discussion to show how Proposition~\ref{prop:complexlarge} can be useful for our purposes.  Suppose that $\ell$ is a one-dimensional move such that
\begin{eqnarray}
\widehat{\ell} ( 4 ) & = & 1
\end{eqnarray}
and that $\delta, \nu \in (0, 1)$ are such that
the inequality $\left| \widehat{\ell} ( \alpha ) \right| \leq \nu$ is always satisfied when \begin{eqnarray}
\alpha \in [3,5] \smallsetminus (4-\delta, 4 + \delta).\end{eqnarray}
Then, for any $\alpha \in [3, 5]$, the rational expression for $\widehat{\ell} ( \alpha )$ is given by
\begin{eqnarray}
\widehat{\ell} ( \alpha ) & = & \sum_x \ell (  x ) \cdot \frac{x \alpha - x}{x + \alpha - 2}.
\end{eqnarray}
By Proposition~\ref{prop:complexlarge}, there exists a unit-length complex number $\zeta$ such that
\begin{eqnarray}
\left| \sum_x \ell ( x ) \cdot \left( \frac{x (4 + \zeta) - x}{x + (4 + \zeta) - 2} \right) \right| & \geq & \nu^{-\Omega ( 1/ \delta )}.
\end{eqnarray}
Therefore, 
\begin{eqnarray}
\nu^{-\Omega ( 1/\delta ) } & \leq & \sum_x \left| \ell ( x ) \right| \cdot \frac{ \left| x (4 + \zeta) - x \right|}{\left| x + (4 + \zeta) - 2 \right|}  \\
& \leq & 
\sum_x \left| \ell ( x ) \right| \cdot \frac{\left| x (4 + 1) - x \right| }{\left| x + (4 - 1) - 2 \right|} \\
& \leq & \sum_x \left| \ell ( x ) \right| \cdot 4.
\end{eqnarray}
Thus we conclude that $\left\| \ell \right\|_1 \geq v^{-\Omega ( 1/ \delta ) }$.  Informally, this means that a one-dimensional move $\ell$ can only achieve a profile that is highly concentrated around $x = 4$ if $\ell$ has exponentially large coefficients.  This is similar to reasoning that we will use in section~\ref{sec:commbound}.  

\section{A lemma on random variables with bounded expectation}

\label{sec:expect}

The following elementary lemma addresses a case where two constraints on the expectation of a random variable approximately determine the value of the random variable.  This lemma will be used (in the context of some artificially constructed random variables) in order to carry out an intermediate step in the main proof of section~\ref{sec:commbound}.

\begin{lemma}
\label{lemma:expectation}
There exists a universal function $\mathbf{A} ( u ) \in O ( \sqrt{ u } )$ such that the following holds.  
If $X$ is any positive real-valued random variable satisfying
\begin{eqnarray}
\mathbb{E} [ X ] & \leq & 1, \\
\mathbb{E} [ 1/ X ] & \leq & 1 + \delta,
\end{eqnarray}
with $\delta > 0$, then
\begin{eqnarray}
\mathbb{P} ( \left| X - 1\right| < \mathbf{A} ( \delta ) ) & \geq & \frac{2}{3}.
\end{eqnarray}
\end{lemma}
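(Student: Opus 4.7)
The plan is to exploit the identity $x + 1/x - 2 = (x-1)^2/x$, which is nonnegative on $(0,\infty)$ and vanishes only at $x=1$. Under the two hypotheses, linearity of expectation gives
\begin{eqnarray}
\mathbb{E}\!\left[ \frac{(X-1)^2}{X} \right] & = & \mathbb{E}[X] + \mathbb{E}[1/X] - 2 \;\; \leq \;\; \delta.
\end{eqnarray}
So the problem reduces to showing that the nonnegative quantity $(X-1)^2/X$ cannot be small unless $X$ itself is close to $1$, and then applying Markov's inequality.

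Next I would analyze $g(x) := (x-1)^2/x$ on $(0,\infty)$. Differentiation shows $g'(x) = 1 - 1/x^2$, so $g$ is strictly decreasing on $(0,1)$ and strictly increasing on $(1,\infty)$, with minimum value $0$ at $x=1$. Hence for any $c \in (0,1)$, whenever $|x-1| \geq c$ with $x > 0$ we have either $x \geq 1+c$, giving $g(x) \geq g(1+c) = c^2/(1+c) \geq c^2/2$, or $0 < x \leq 1-c$, giving $g(x) \geq g(1-c) = c^2/(1-c) \geq c^2$. In either case, $g(x) \geq c^2/2$.

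Combining this pointwise inequality with Markov's inequality applied to the nonnegative random variable $g(X)$ yields
\begin{eqnarray}
\mathbb{P}(|X - 1| \geq c) & \leq & \mathbb{P}\!\left( g(X) \geq c^2/2 \right) \;\; \leq \;\; \frac{2 \, \mathbb{E}[g(X)]}{c^2} \;\; \leq \;\; \frac{2 \delta}{c^2},
\end{eqnarray}
valid whenever $0 < c < 1$. Setting $c = \sqrt{6 \delta}$ makes the right-hand side equal to $1/3$, so defining $\mathbf{A}(\delta) := \sqrt{6 \delta}$ gives $\mathbb{P}(|X-1| < \mathbf{A}(\delta)) \geq 2/3$ whenever $\delta$ is small enough that $\sqrt{6\delta} < 1$.

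The remaining concern is merely bookkeeping: for $\delta$ large enough that $\sqrt{6\delta} \geq 1$ (equivalently $\delta \geq 1/6$), the conclusion is trivial since any $\mathbf{A}(\delta) \geq 1$ is vacuously valid (by positivity, $|X-1| < X+1$, but more simply we can just enlarge the constant in the $O(\sqrt{u})$). So take $\mathbf{A}(u) = \max\{\sqrt{6 u},\; 1\}$ for large $u$, which still lies in $O(\sqrt{u})$. I do not expect any substantive obstacle here; the entire argument is a one-line application of AM--GM together with Markov, and the hardest conceptual step is simply noticing the identity $x + 1/x - 2 = (x-1)^2/x$ that converts the two moment bounds into a single second-moment-style bound.
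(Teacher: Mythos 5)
Your proof is correct and takes essentially the same route as the paper: both start from $\mathbb{E}[X + 1/X - 2] \le \delta$, observe this quantity is nonnegative, and apply Markov's inequality to conclude $|X-1| = O(\sqrt\delta)$ with probability at least $2/3$. The only organizational difference is that the paper applies Markov directly with threshold $3\delta$ and then solves the quadratic inequality $X^2 - (2+3\delta)X + 1 < 0$ to extract the interval for $X$, which works uniformly in $\delta$ and so sidesteps your large-$\delta$ case split. One small bookkeeping correction there: your fallback $\mathbf{A}(u) = \max\{\sqrt{6u},1\}$ is not quite large enough, since the hypotheses only give $\mathbb{P}(X < 2) \ge 1/2$ by Markov applied to $X$, not $\mathbb{P}(|X-1|<1) \ge 2/3$; a constant of at least $3$ suffices, since $\mathbb{P}(X<3)\ge 2/3$ and hence $\mathbb{P}(|X-1|<3)\ge 2/3$. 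As you note, this does not affect the $O(\sqrt u)$ behavior as $u\to 0$, which is the regime that matters for the application.
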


\begin{proof} We have
\begin{eqnarray}
\mathbb{E} [ X + 1/X - 2 ] & \leq & \delta,
\end{eqnarray}
and $X + 1/X - 2 \geq 0$.  Therefore,
\begin{eqnarray}
\mathbb{P} [ X + 1/X - 2 \geq 3 \delta  ] & \leq & \frac{1}{3},
\end{eqnarray}
or equivalently,
\begin{eqnarray}
\label{quadevent}
\mathbb{P} [ X^2 - (2 + 3 \delta )X + 1 < 0 ] & \geq & \frac{2}{3}.
\end{eqnarray}
By the quadratic formula, the event on the left side of inequality (\ref{quadevent}) is equivalent to
\begin{eqnarray}
\left| X - (1 + \frac{3}{2} \delta) \right|  & < &  \sqrt{ 3 \delta + \frac{9}{4} \delta^2 }  .
\end{eqnarray}
Thus with probability at least $2/3$, $\left| X - 1 \right|$ is less than
\begin{eqnarray}
\frac{3}{2} \delta + \sqrt{ 3 \delta + \frac{9}{4} \delta^2 }.
\end{eqnarray}
The function above is in $O ( \sqrt{\delta} )$, and this completes the proof.
\end{proof}

\section{The $1$-norm of a time-independent point game}

\label{sec:commbound}

This section will perform most
of the remaining technical work necessary to achieve our main result.
Throughout this section, suppose that $\tau \in (0,1)$, and that $g$ is a horizontally valid move such that $g + g^\top = t_\tau$, where
$t_\tau$ denotes the following move (see subsection~\ref{subsec:targ}):
\begin{eqnarray}
t_\tau & = & 2 \llbracket 1,1 \rrbracket -  \llbracket 2-\tau,0 \rrbracket -  \llbracket 0, 2-\tau \rrbracket.
\end{eqnarray}
In this section we will show that the inequality  $\left\| g \right\|_1 \geq \exp ( \Omega ( \tau^{-1/2} ) )$ must always hold.  This is the result that will be used in section~\ref{sec:lp4} to conclude that any weak coin-flipping protocol that achieves bias $\epsilon$ must involve at least $\exp ( \Omega ( \epsilon^{-1/2} ))$ communication rounds.  

For any $b \geq 0$, let $g_b \colon \mathbb{R}_{\geq 0} \times \mathbb{R}_{\geq 0} \to \mathbb{R}$ denote the move defined by
\begin{eqnarray}
g_b ( x, y) & = & \left\{
\begin{array}{cl} g(x,y) &
\hskip0.2in \textnormal{if } y =b \\ \\
0 & \hskip0.2in \textnormal{if } y \neq b.
\end{array} \right.
\end{eqnarray}
(That is, $g_b$ agrees with $g$ on the horizontal line $y=b$, and is zero elsewhere.)  
Note that since $g$ is horizontally valid, the profile function $\widehat{g_b}$ of $g_b$ is always a nonnegative function.

\subsection{Basic properties of $g$}

\label{subsec:lp1}

By assumption, $g$ is a horizontally valid move and its profile function $\widehat{g}$ satisfies
\begin{eqnarray}
\widehat{g} + \widehat{g}^\top = \widehat{t_\tau}.
\end{eqnarray}  The function $\widehat{t_\tau}$ is written out explicitly in equations (\ref{tauexplicit}) and (\ref{singleprofile}).  The following facts are easily deduced.

\begin{fact}
\label{onequiv}
If $\alpha \in [2, \infty]$, then
$\widehat{g} ( \alpha , \alpha ) = 1$.
\end{fact}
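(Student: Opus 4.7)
The plan is straightforward and rests on the symmetry of the target move. Since $g + g^\top = t_\tau$, taking profiles gives $\widehat{g} + \widehat{g^\top} = \widehat{t_\tau}$, so I would first verify the elementary identity $\widehat{g^\top}(\alpha,\beta) = \widehat{g}(\beta,\alpha)$. This follows by expanding the definition:
\begin{eqnarray}
\widehat{g^\top}(\alpha,\beta) & = & \sum_{x,y} g^\top(x,y) P_x(\alpha) P_y(\beta) \\
& = & \sum_{x,y} g(y,x) P_x(\alpha) P_y(\beta) \\
& = & \widehat{g}(\beta,\alpha).
\end{eqnarray}

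Evaluating on the diagonal $(\alpha,\alpha)$ then collapses the two terms, yielding $\widehat{t_\tau}(\alpha,\alpha) = 2\widehat{g}(\alpha,\alpha)$.

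Next I would look up the value of $\widehat{t_\tau}(\alpha,\alpha)$ using the explicit formula in equation~(\ref{tauexplicit}). For $\alpha \in [2,\infty]$, we land in the first case (both coordinates $\geq 2$), which gives $\widehat{t_\tau}(\alpha,\alpha) = 2$. Combining this with the previous display yields $\widehat{g}(\alpha,\alpha) = 1$ for all such $\alpha$, as claimed.

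There is no real obstacle here: the statement is purely a bookkeeping fact about the decomposition $g + g^\top = t_\tau$ paired with the explicit formula~(\ref{tauexplicit}) for the target profile on the region where both arguments exceed $2$. The only thing to be slightly careful about is verifying the transpose identity at the level of profiles (which is what makes the symmetrization $\widehat{g} + \widehat{g^\top}$ behave as $2\widehat{g}$ on the diagonal), but this is immediate from the symmetric product form $P_x(\alpha)P_y(\beta)$ in Definition~\ref{def:2profile}.
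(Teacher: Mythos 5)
Your proof is correct and matches what the paper (implicitly) does: the paper states $\widehat{g} + \widehat{g}^\top = \widehat{t_\tau}$ and calls the fact "easily deduced," and your argument — verifying $\widehat{g^\top}(\alpha,\beta) = \widehat{g}(\beta,\alpha)$, restricting to the diagonal, and reading off $\widehat{t_\tau}(\alpha,\alpha)=2$ from equation~(\ref{tauexplicit}) — is exactly the intended deduction.
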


\begin{fact}
\label{gpbound}
If $\alpha, \beta \in [2, \infty]$, then $\widehat{g} ( \alpha, \beta ) \leq 2$.
\end{fact}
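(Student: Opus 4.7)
The plan is to reduce the claim to the nonnegativity of the profile function of a horizontally valid move, using the symmetric decomposition $g + g^\top = t_\tau$ and the explicit formula for $\widehat{t_\tau}$. First, I would observe that the profile operator $q \mapsto \widehat{q}$ (Definition~\ref{def:2profile}) is both linear in $q$ and commutes with transposition: a one-line change of summation variables in the defining sum yields $\widehat{q^\top}(\alpha, \beta) = \widehat{q}(\beta, \alpha)$ for every move $q$. Applied to the hypothesis $g + g^\top = t_\tau$, this gives the pointwise identity
$$\widehat{g}(\alpha, \beta) + \widehat{g}(\beta, \alpha) \;=\; \widehat{t_\tau}(\alpha, \beta)$$
on all of $[1, \infty] \times [1, \infty]$.

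Next I would restrict to the regime of interest. For $\alpha, \beta \in [2, \infty]$ the explicit expression (\ref{tauexplicit}) collapses to $\widehat{t_\tau}(\alpha, \beta) = 2$, so the display above becomes $\widehat{g}(\alpha, \beta) = 2 - \widehat{g}(\beta, \alpha)$. Since $g$ is assumed horizontally valid, Proposition~\ref{prop:horizprof} gives $\widehat{g} \geq 0$ everywhere on its domain, and in particular $\widehat{g}(\beta, \alpha) \geq 0$. Hence $\widehat{g}(\alpha, \beta) \leq 2$, which is Fact~\ref{gpbound}. As a bonus, specializing the same display to $\beta = \alpha$ yields $2\,\widehat{g}(\alpha, \alpha) = 2$, recovering Fact~\ref{onequiv} simultaneously.

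There is no substantive obstacle here; the argument is a routine symmetry-plus-nonnegativity consequence of the setup. The only point that requires a moment's care is the transpose identity $\widehat{g^\top} = \widehat{g}^\top$, which is a mechanical relabeling in the double sum defining the profile, and the verification that $\widehat{t_\tau}$ really does equal the constant $2$ on $[2,\infty] \times [2,\infty]$, which is immediate from (\ref{tauexplicit}).
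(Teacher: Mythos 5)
Your proof is correct and matches the argument the paper has in mind when it says the facts are "easily deduced" from $\widehat{g}+\widehat{g}^\top=\widehat{t_\tau}$: on $[2,\infty]\times[2,\infty]$ one has $\widehat{t_\tau}\equiv 2$ by (\ref{tauexplicit}), and horizontal validity plus Proposition~\ref{prop:horizprof} gives $\widehat{g}(\beta,\alpha)\geq 0$, so $\widehat{g}(\alpha,\beta)=2-\widehat{g}(\beta,\alpha)\leq 2$. The transpose identity $\widehat{q^\top}=(\widehat{q})^\top$ that you flag is exactly the relabeling the paper implicitly uses in writing $\widehat{g}+\widehat{g}^\top=\widehat{t_\tau}$.
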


By Propositions~\ref{prop:horiz1} and \ref{prop:horiz2}, we have the following.

\begin{fact}
If $\alpha \in [1,2)$ and $\beta \in [1, \infty]$, then $\widehat{g} ( \alpha, \beta ) = 0$.
\end{fact}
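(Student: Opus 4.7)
This fact is a direct specialization of Proposition~\ref{prop:horiz1} to the move $g$. Recall that the standing assumption of Section~\ref{sec:commbound} is that $g$ is horizontally valid, and Proposition~\ref{prop:horiz1} asserts that for any horizontally valid move $q$ and any $\alpha \in [1,2)$, one has $\widehat{q}(\alpha, \beta) = 0$ for every $\beta \in [1, \infty]$. So the plan is simply to invoke that proposition; no genuinely new argument is required.

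For the record, here is the one-line reason why Proposition~\ref{prop:horiz1} applies. Inspecting Definition~\ref{def:1profile}, the single-variable profile $P_x$ is identically equal to $1$ on the interval $[1,2)$ for every $x \in \mathbb{R}_{\geq 0}$ (including $x=0$). Therefore on the strip $\alpha \in [1,2)$ the two-variable profile collapses to
\begin{eqnarray}
\widehat{g}(\alpha, \beta) & = & \sum_{x,y} g(x,y) P_x(\alpha) P_y(\beta) \;=\; \sum_y \left( \sum_x g(x,y) \right) P_y(\beta).
\end{eqnarray}
Horizontal validity of $g$ says that each row $\ell_y \colon x \mapsto g(x,y)$ is a valid one-dimensional move, and the first condition of Definition~\ref{def:valid} forces $\sum_x \ell_y(x) = 0$. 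The inner sum therefore vanishes for every $y$, and $\widehat{g}(\alpha, \beta) = 0$.

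Since the argument is a direct citation, there is no real obstacle; the fact is being recorded here so that subsequent work in Section~\ref{sec:commbound} may restrict attention to the nontrivial strip $\alpha \in [2, \infty]$, complementing Facts~\ref{onequiv} and~\ref{gpbound} which pin down the behavior of $\widehat{g}$ on $[2,\infty] \times [2,\infty]$.
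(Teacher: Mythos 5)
Your proposal is correct and matches the paper's approach exactly: the paper introduces this fact with the remark that it follows from Propositions~\ref{prop:horiz1} and \ref{prop:horiz2}, and your citation of Proposition~\ref{prop:horiz1} applied to the horizontally valid move $g$ is precisely that. The additional unwinding you give of why $P_x$ is constant on $[1,2)$ simply replays the proof of Proposition~\ref{prop:horiz1} and adds nothing new, but it is accurate.
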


\begin{fact}
If $\alpha \in [2,\infty]$, then $\widehat{g} ( \alpha, 1) = 2 - P_{2 - \tau} ( \alpha )$.
\end{fact}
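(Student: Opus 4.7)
My plan is to derive this directly from the relation $g + g^\top = t_\tau$ combined with the horizontal validity of $g$, so that the entire content becomes an evaluation of $\widehat{t_\tau}(\alpha,1)$ against the earlier explicit formulas.

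First I would take profiles of both sides of $g + g^\top = t_\tau$ at $(\alpha, 1)$. Since $\widehat{(\cdot)^\top}(\alpha,1) = \widehat{(\cdot)}(1,\alpha)$ by the definition of the transpose and Definition~\ref{def:2profile} (the profile is built from the tensor products $P_x(\alpha)P_y(\beta)$, which are symmetric under swapping $x \leftrightarrow y$ and $\alpha \leftrightarrow \beta$), this gives
\[
\widehat{g}(\alpha,1) + \widehat{g}(1,\alpha) \;=\; \widehat{t_\tau}(\alpha,1).
\]
Next I would use Proposition~\ref{prop:horiz1}: because $g$ is horizontally valid and $1 \in [1,2)$, the value $\widehat{g}(1,\alpha)$ vanishes identically in $\alpha$. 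That eliminates the second term on the left.

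It then remains to evaluate $\widehat{t_\tau}(\alpha,1)$ for $\alpha \in [2,\infty]$. This is the $\alpha \geq 2$, $\beta < 2$ branch of the piecewise expression (\ref{tauexplicit}), giving $\widehat{t_\tau}(\alpha,1) = 2 - P_{2-\tau}(\alpha)$. Substituting yields $\widehat{g}(\alpha,1) = 2 - P_{2-\tau}(\alpha)$, which is the claim.

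There is no real obstacle here: the lemma is a bookkeeping consequence of Proposition~\ref{prop:horiz1} and the explicit formula for $\widehat{t_\tau}$ that was recorded in subsection~\ref{subsec:targ}. The only minor thing one should check carefully is that profiles indeed commute with taking the transpose at the level $\widehat{q^\top}(\alpha,\beta) = \widehat{q}(\beta,\alpha)$, but this is immediate from Definition~\ref{def:2profile}. The same template (take profiles of $g + g^\top = t_\tau$ at some $(\alpha,\beta)$, kill any factor whose first or second argument lies in $[1,2)$ via Proposition~\ref{prop:horiz1}, and then read off the value of $\widehat{t_\tau}$ from the piecewise formula) is exactly what yields the three preceding facts in subsection~\ref{subsec:lp1}, so this fact fits cleanly at the end of that sequence.
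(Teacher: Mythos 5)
Your proof is correct and is essentially the argument the paper intends (the paper just cites Propositions~\ref{prop:horiz1} and~\ref{prop:horiz2} and leaves the bookkeeping implicit): transpose-compatibility of profiles plus $\widehat{g}(1,\alpha)=0$ from Proposition~\ref{prop:horiz1} reduces the claim to reading off $\widehat{t_\tau}(\alpha,1)$ from~(\ref{tauexplicit}).
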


\begin{fact}
\label{fact:inittaubound}
If $\alpha \in [2,\infty]$, then $\widehat{g} ( \alpha, 2) \leq 2 - P_{2 - \tau} ( \alpha )$.
\end{fact}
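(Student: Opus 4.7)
The plan is to deduce the inequality in one step by chaining Proposition~\ref{prop:horiz2} with the immediately preceding fact. Since $g$ is horizontally valid by hypothesis, Proposition~\ref{prop:horiz2} applies to $q := g$ and yields $\widehat{g}(\alpha, 2) \leq \widehat{g}(\alpha, 1)$ for every $\alpha \in [1, \infty]$. On the restricted range $\alpha \in [2, \infty]$ the preceding fact identifies the right-hand side exactly, namely $\widehat{g}(\alpha, 1) = 2 - P_{2-\tau}(\alpha)$, and substituting this into the previous inequality completes the argument.

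Mechanically, the invocation of Proposition~\ref{prop:horiz2} is the pointwise observation that $P_y(1) \geq P_y(2)$ for every $y$ (a direct check from Definition~\ref{def:1profile}), together with the fact that the inner factor $\sum_x g(x, y) P_x(\alpha)$ appearing in the row decomposition of $\widehat{g}(\alpha, \cdot)$ is the single-variable profile of the $y$-th row of $g$, which is nonnegative because horizontal validity of $g$ makes each row a valid one-dimensional move (Proposition~\ref{prop:profpos}). Consequently each summand of $\widehat{g}(\alpha, 1) - \widehat{g}(\alpha, 2)$ is nonnegative. No further computation is required, and in particular no new appeal to the symmetry relation $\widehat{g} + \widehat{g}^\top = \widehat{t_\tau}$ is needed at this step, since that relation has already been absorbed into the identification of $\widehat{g}(\alpha, 1)$ in the preceding fact.

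I anticipate no real obstacle: the statement is labelled a \emph{fact} rather than a \emph{proposition} precisely because it reduces in one line to earlier-established machinery. The symmetry relation for $\widehat{t_\tau}$ combined with Proposition~\ref{prop:horiz1} (which kills $\widehat{g}(1, \alpha)$) delivers the clean identity for $\widehat{g}(\alpha, 1)$ used in the preceding fact, and the monotonicity $P_y(1) \geq P_y(2)$ delivers Proposition~\ref{prop:horiz2}. Fact~\ref{fact:inittaubound} is simply the composition of these two inputs, recorded separately so that later arguments in section~\ref{sec:commbound} can cite a ready-made upper bound on $\widehat{g}(\alpha, 2)$ in the same form as the exact value of $\widehat{g}(\alpha, 1)$.
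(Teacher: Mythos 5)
Your proposal is correct and matches the paper's intended argument exactly: the paper presents Fact~\ref{fact:inittaubound} as an immediate consequence of Propositions~\ref{prop:horiz1} and~\ref{prop:horiz2} (together with the preceding fact $\widehat{g}(\alpha,1) = 2 - P_{2-\tau}(\alpha)$), and your chain $\widehat{g}(\alpha,2) \leq \widehat{g}(\alpha,1) = 2 - P_{2-\tau}(\alpha)$ via Proposition~\ref{prop:horiz2} is precisely that deduction. Nothing is missing.
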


It is easily seen from Definition~\ref{def:1profile} that
for $\alpha \in [2, \infty]$,
\begin{eqnarray}
P_{2 - \tau} ( \alpha ) & \geq &
P_2 ( \alpha ) - \tau \\
& = & \left( 2 - \frac{2}{\alpha} \right) - \tau.
\end{eqnarray}
Thus from Fact~\ref{fact:inittaubound} we have
\begin{fact}
\label{fact:tbound}
If $\alpha \in [2,\infty]$, then $\widehat{g} ( \alpha, 2) \leq \frac{2}{\alpha} + \tau$.
\end{fact}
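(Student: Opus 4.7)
The plan is to deduce Fact~\ref{fact:tbound} directly from Fact~\ref{fact:inittaubound}, which already supplies $\widehat{g}(\alpha, 2) \leq 2 - P_{2-\tau}(\alpha)$. What remains is to establish the inequality $P_{2-\tau}(\alpha) \geq P_2(\alpha) - \tau = 2 - \frac{2}{\alpha} - \tau$ on $\alpha \in [2, \infty]$, where the value at $\alpha = \infty$ is read off the third case of Definition~\ref{def:1profile} (so $P_2(\infty) = 2$ and $P_{2-\tau}(\infty) = 2-\tau$, for which the claimed bound is an equality).

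For $\alpha \in [2, \infty)$, I would substitute the rational expression $P_x(\alpha) = (x\alpha - x)/(x + \alpha - 2)$ from Definition~\ref{def:1profile} and compute the difference directly:
\[
P_2(\alpha) - P_{2-\tau}(\alpha) \;=\; \frac{2(\alpha-1)}{\alpha} \;-\; \frac{(2-\tau)(\alpha-1)}{\alpha - \tau} \;=\; \frac{\tau\,(\alpha-1)(\alpha-2)}{\alpha(\alpha-\tau)}.
\]
The desired bound $P_{2-\tau}(\alpha) \geq P_2(\alpha) - \tau$ is then equivalent to $(\alpha-1)(\alpha-2) \leq \alpha(\alpha - \tau)$, which simplifies to $(3-\tau)\alpha \geq 2$. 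Since $\tau \in (0,1)$ and $\alpha \geq 2$, this holds with room to spare. Alternatively, one can observe that the map $x \mapsto P_x(\alpha)$ has derivative $\partial_x P_x(\alpha) = (\alpha-1)(\alpha-2)/(x+\alpha-2)^2$ on $[2,\infty)$, which is bounded above by $1$ for $x \in [2-\tau, 2]$ whenever $\alpha \geq 2$, so that $P_2(\alpha) - P_{2-\tau}(\alpha) \leq \tau$ follows by the mean value theorem.

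Combining either argument with Fact~\ref{fact:inittaubound} yields $\widehat{g}(\alpha, 2) \leq 2 - P_{2-\tau}(\alpha) \leq 2 - P_2(\alpha) + \tau = \frac{2}{\alpha} + \tau$, which is exactly the stated bound. There is no real obstacle here: the fact is a clean repackaging of Fact~\ref{fact:inittaubound} that replaces the somewhat opaque profile value $P_{2-\tau}(\alpha)$ by a transparent expression in $\alpha$ and $\tau$, which will presumably be much more convenient to feed into the subsequent estimates in section~\ref{sec:commbound}.
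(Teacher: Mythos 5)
Your proof is correct and matches the paper's own argument: deduce the bound from Fact~\ref{fact:inittaubound} together with the inequality $P_{2-\tau}(\alpha) \geq P_2(\alpha) - \tau$. The paper leaves that inequality as ``easily seen,'' whereas you verify it explicitly (by both direct computation and the mean value theorem), and both verifications are correct.
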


\subsection{The isolating function}

\label{subsec:lp2}

\label{subsec:isol}

For any $a \in (2, \infty)$, we know (Fact~\ref{onequiv}) that
\begin{eqnarray}
\sum_b \widehat{g_b} ( a, a) = \widehat{g} ( a, a ) = 1,
\label{majcont}
\end{eqnarray}
and that each term $\widehat{g_b} ( a, a )$ in the above summation is nonnegative.  Our next goal is to show that the majority of the contribution to $\widehat{g} ( a, a )$ comes from the terms $\widehat{g_b} ( a, a )$ for which $b$ is close to $a$.  This is formally stated as follows.

\begin{proposition}
\label{prop:isolating}
There is a universal function $\mathbf{I} (u ) \in O ( \sqrt{u} )$ such that for any $a \in [3,5]$,
\begin{eqnarray}
\sum_{b \colon \left| b - a \right| < \mathbf{I} ( \tau ) } \widehat{g_b} ( a, a ) & \geq & \frac{2}{3}.
\end{eqnarray}
\end{proposition}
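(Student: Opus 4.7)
The plan is to interpret $\{\widehat{g_b}(a,a)\}_b$ as a probability distribution $\mu$ on the $b$-axis and extract two moment bounds on $b$ from the profile identities of subsection~\ref{subsec:lp1}, then apply Lemma~\ref{lemma:expectation}. By Fact~\ref{onequiv}, $\sum_b \widehat{g_b}(a,a) = \widehat{g}(a,a) = 1$, and horizontal validity of $g$ guarantees $\widehat{g_b}(a,a) \geq 0$, so $\mu(b) := \widehat{g_b}(a,a)$ is a bona fide (finitely supported) probability distribution. Since $P_0(a) = 0$ for $a \geq 2$, the point $b=0$ carries no mass, so the positive random variable $X := b/a$ is well defined $\mu$-almost surely.

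The key observation is that if we write $\widehat{g_b}(\alpha, \beta) = P_b(\beta) h_b(\alpha)$, where $h_b(\alpha) := \sum_x g(x,b) P_x(\alpha)$, then for every $\beta \in [1,\infty]$,
\begin{eqnarray}
\mathbb{E}_\mu\!\left[\frac{P_b(\beta)}{P_b(a)}\right] & = & \sum_b \frac{P_b(\beta)}{P_b(a)} P_b(a) h_b(a) \;=\; \widehat{g}(a,\beta).
\end{eqnarray}
The two facts available to us give two useful specializations. Taking $\beta = \infty$, so $P_b(\infty) = b$ and $P_b(a) = b(a-1)/(b+a-2)$, yields $\mathbb{E}_\mu[(b+a-2)/(a-1)] = \widehat{g}(a,\infty)$, which after combining with Fact~\ref{gpbound} ($\widehat{g}(a,\infty) \leq 2$) and a one-line rearrangement becomes $\mathbb{E}_\mu[b/a] \leq 1$, i.e., $\mathbb{E}[X] \leq 1$. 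Taking $\beta = 2$, so $P_b(2) = 1$ for $b > 0$, yields $\mathbb{E}_\mu[(b+a-2)/(b(a-1))] = \widehat{g}(a,2)$, which after combining with Fact~\ref{fact:tbound} ($\widehat{g}(a,2) \leq 2/a + \tau$) and another short rearrangement becomes $\mathbb{E}_\mu[a/b] \leq 1 + \tau \cdot a(a-1)/(a-2)$, i.e., $\mathbb{E}[1/X] \leq 1 + c_a \tau$ where $c_a := a(a-1)/(a-2)$ is bounded by a universal constant (e.g., $20$) on $a \in [3,5]$.

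With these two moment bounds in hand, Lemma~\ref{lemma:expectation} applied with $\delta = c_a \tau = O(\tau)$ gives $\mathbb{P}_\mu[|b/a - 1| < \mathbf{A}(c_a \tau)] \geq 2/3$, where $\mathbf{A}(u) \in O(\sqrt{u})$. Multiplying through by $a \leq 5$ translates this directly into $\mathbb{P}_\mu[|b - a| < 5\,\mathbf{A}(c_a \tau)] \geq 2/3$, establishing the proposition with $\mathbf{I}(\tau) := 5\,\mathbf{A}(c_a \tau) \in O(\sqrt{\tau})$.

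The only conceptual step is the first one, namely recognizing that the two qualitatively different bounds from subsection~\ref{subsec:lp1} (the uniform upper bound $\widehat{g}(a,\infty) \leq 2$ coming from $g + g^\top = t_\tau$, and the tight upper bound $\widehat{g}(a,2) \leq 2/a + \tau$ coming from the same identity on the boundary) are precisely the inputs needed to control $\mathbb{E}[b/a]$ and $\mathbb{E}[a/b]$ respectively, so that the concentration mechanism of Lemma~\ref{lemma:expectation} applies. After that pairing is spotted, everything else is routine manipulation with $P_b$.
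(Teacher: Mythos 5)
Your proof is correct and follows essentially the same strategy as the paper's: define a probability distribution from $b\mapsto\widehat{g_b}(a,a)$ (the paper packages this as a stochastic map $Y$), extract the two moment bounds $\mathbb{E}[b/a]\leq 1$ and $\mathbb{E}[a/b]\leq 1+O(\tau)$ from $\widehat g(a,\infty)\leq 2$ and $\widehat g(a,2)\leq 2/a+\tau$ respectively, and then invoke Lemma~\ref{lemma:expectation}; your derivation via $\mathbb{E}_\mu[P_b(\beta)/P_b(a)]=\widehat g(a,\beta)$ is a slightly slicker way of organizing the same algebra. One small correction: as written, $\mathbf{I}(\tau):=5\,\mathbf{A}(c_a\tau)$ depends on $a$ and so is not universal; since you already observed that $c_a\leq 20$ on $[3,5]$ and $\mathbf{A}$ is nondecreasing, you should instead set $\mathbf{I}(\tau):=5\,\mathbf{A}(20\tau)$ (the paper uses the tighter bound $c_a\leq 7$ and takes $\mathbf{I}(u):=5\,\mathbf{A}(7u)$).
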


We note that the choice of the interval $[3,5]$ in this statement is somewhat arbitrary --- the proof method that follows can be used to prove the same statement over any interval $[p,q]$ for which $2 < p < q < \infty$ (with a different choice of function $\mathbf{I}$).  The reason for making this type of restriction is that it facilitates calculations involving universal big-$O$ error terms.

\begin{proof}[Proof of Proposition~\ref{prop:isolating}]

Let $Y$ be the stochastic map from $[3,5]$ to $\mathbb{R}_{> 0}$ defined by
\begin{eqnarray}
\mathbb{P} ( Y (a ) = b ) & = &  \widehat{g_b} ( a, a ).
\end{eqnarray}
Note that by construction, for any $a \in [3,5]$ and any $b >0$, the function on $[1, \infty]$ defined by
\begin{eqnarray}
\beta & \mapsto & \widehat{g_b} ( a, \beta )
\end{eqnarray}
is a scalar multiple of the profile function of $b$ (that is, $\beta \mapsto P_b ( \beta )$).  Therefore,
\begin{eqnarray}
\widehat{g_b} ( a, 2) = 
P_b ( 2 ) \cdot \frac{\widehat{g}_b ( a, a ) }{P_b ( a )} = 1 \cdot \frac{\mathbb{P} ( Y(a ) = b) }{P_b ( a )} \\
\widehat{g_b} ( a, \infty) =
P_b ( \infty ) \cdot \frac{\widehat{g}_b ( a, a )}{P_b ( a )} = 
b \cdot \frac{\mathbb{P} ( Y(a ) = b )}{P_b ( a )}
\end{eqnarray}
which implies
\begin{eqnarray}
\widehat{g} ( a, 2 ) & = & 
\mathbb{E} \left[ \frac{1}{P_{Y(a)}(a)} \right] \\
\widehat{g} ( a, \infty ) & = & 
\mathbb{E} \left[ \frac{Y(a) }{P_{Y(a)}(a)} \right],
\end{eqnarray}
where the expectations are taken over the random variable $Y(a)$.
Expanding these expressions using the definition of the profile function, we have
\begin{eqnarray}
\widehat{g} ( a, 2 ) & = & 
\mathbb{E} \left[ \frac{Y(a) + a - 2}{Y(a) \cdot a - Y(a) } \right] \\
\label{2expformula}
& = & \mathbb{E} \left[ \left( \frac{1}{a-1} \right) + \left( \frac{1}{Y(a)} \right) \left( \frac{a-2}{a-1} \right) \right]
\end{eqnarray}
and
\begin{eqnarray}
\label{infexpformula}
\widehat{g} ( a, \infty ) 
& = & \mathbb{E} \left[  Y(a) \left( \frac{1}{a-1} \right) +   \left( \frac{a-2}{a-1} \right)  \right].
\end{eqnarray}
By Fact~\ref{fact:tbound} and Fact~\ref{gpbound},
\begin{eqnarray}
\widehat{g} ( a, 2 ) & \leq &
\frac{2}{a} + \tau, \\
\widehat{g} ( a, \infty) & \leq & 2.
\end{eqnarray}
Combining the above two inequalities with formulas (\ref{2expformula}) and (\ref{infexpformula}) above yields
\begin{eqnarray}
\label{preconeover}
\mathbb{E} \left[ \left( \frac{1}{Y(a)} \right) \left( \frac{a-2}{a-1} \right) \right] & \leq & \frac{a-2}{a (a-1)} + \tau \\
\mathbb{E} \left[ Y ( a ) \left( \frac{1}{a-1} \right) \right] & \leq & \frac{a}{a-1}.
\end{eqnarray}
Multiplying the equations above by $[a(a-1)/(a-2)]$ and $[(a-1)/a]$ respectively, we obtain
\begin{eqnarray}
\label{oneover}
\label{seventau}
\mathbb{E} \left[  \frac{a}{Y(a)}  \right] & \leq & 1 + 7 \tau \\
\mathbb{E} \left[ \frac{ Y ( a )}{a} \right]  & \leq & 1,
\end{eqnarray}
where, in (\ref{seventau}), we used the fact that the function $a \mapsto [a(a-1)/(a-2)]$ on the interval $[3,5]$ does not exceed $7$.

If we let $\mathbf{A} ( u )$ be the function from Lemma~\ref{lemma:expectation}, then
\begin{eqnarray}
\mathbb{P} \left(  \left| \frac{Y ( a ) }{a} - 1 \right| < \mathbf{A} ( 7 \tau ) \right) & \geq & \frac{2}{3}.
\end{eqnarray}
Therefore,
\begin{eqnarray}
\mathbb{P} \left(  \left| Y ( a ) - a \right| < a \mathbf{A} ( 7 \tau ) \right) & \geq & \frac{2}{3}.
\end{eqnarray}
Letting $\mathbf{I} ( u ) := 5 \mathbf{A} ( 7 u)$ therefore yields 
\begin{eqnarray}
\mathbb{P} \left(  \left| Y ( a ) - a \right| < \mathbf{I} ( \tau ) \right) & \geq & \frac{2}{3}
\end{eqnarray}
for any $a \in [3,5]$.
By the definition of the stochastic map $Y$, this implies the desired result.
\end{proof}

We refer to $\mathbf{I}$ as the \textbf{isolating function}.  

\subsection{A lower bound on 
$\left\| g \right\|_1$}

\label{subsec:lp3}

We are now ready to prove a lower bound on $\left\| g \right\|_1$ in terms of $\tau$.  We accomplish this by studying the behavior
of the part of the move $g$ that is concentrated near the horizontal line $y = 4$.  Precisely, we will be concerned with the move
\begin{eqnarray}
\textbf{g} & := & \sum_{b \colon \left| b - 4 \right| < \mathbf{I} ( \tau ) } g_b, 
\end{eqnarray}
where $\mathbf{I}$ is the isolating function from subsection~\ref{subsec:isol}.  (See Figure~\ref{fig:gregion}.)

\begin{proposition}
\label{prop:isolating2}
For all $a \in [3,5]$ such that $\left| a - 4 \right| \geq 2 \mathbf{I} ( \tau )$, we must have 
\begin{eqnarray}
\widehat{\mathbf{g}} (a,a ) & \leq & \frac{1}{3}. \end{eqnarray}
\end{proposition}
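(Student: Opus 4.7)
The plan is to combine Proposition~\ref{prop:isolating} with a simple triangle-inequality argument. The key observation is that $\mathbf{g}$ collects the rows of $g$ concentrated near $y=4$, while Proposition~\ref{prop:isolating} controls the rows concentrated near $y=a$. When $a$ is at least $2\mathbf{I}(\tau)$ away from $4$, these two collections are disjoint in a strong enough sense that the former is forced to be small at the diagonal point $(a,a)$.

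First I would unfold the definitions: by construction,
\begin{eqnarray}
\widehat{\mathbf{g}}(a,a) & = & \sum_{b \colon |b-4| < \mathbf{I}(\tau)} \widehat{g_b}(a,a).
\end{eqnarray}
Next, combining Fact~\ref{onequiv} (which gives $\sum_b \widehat{g_b}(a,a) = \widehat{g}(a,a) = 1$) with Proposition~\ref{prop:isolating}, and using that each term $\widehat{g_b}(a,a)$ is nonnegative (since $g$ is horizontally valid), I obtain
\begin{eqnarray}
\sum_{b \colon |b-a| \geq \mathbf{I}(\tau)} \widehat{g_b}(a,a) & \leq & \frac{1}{3}.
\end{eqnarray}

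The final step is the inclusion $\{b : |b-4| < \mathbf{I}(\tau)\} \subseteq \{b : |b-a| \geq \mathbf{I}(\tau)\}$, which follows from the reverse triangle inequality: if $|a-4| \geq 2\mathbf{I}(\tau)$ and $|b-4| < \mathbf{I}(\tau)$, then
\begin{eqnarray}
|b-a| & \geq & |a-4| - |b-4| \;>\; 2\mathbf{I}(\tau) - \mathbf{I}(\tau) \;=\; \mathbf{I}(\tau).
\end{eqnarray}
Combining this inclusion with the preceding bound and nonnegativity of each $\widehat{g_b}(a,a)$ yields $\widehat{\mathbf{g}}(a,a) \leq 1/3$, as desired.

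There is no real obstacle here; the proposition is essentially a geometric bookkeeping consequence of Proposition~\ref{prop:isolating}. The only thing to watch is that the argument does use horizontal validity of $g$ (through nonnegativity of each single-row profile $\widehat{g_b}$), which is needed to pass from ``mass is concentrated near $y=a$'' to ``mass far from $y=a$ is at most $1/3$.''
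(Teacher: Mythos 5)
Your proof is correct and follows essentially the same route as the paper's: both arguments rest on the reverse triangle inequality showing that the row index sets near $4$ and near $a$ are disjoint once $|a-4|\ge 2\mathbf{I}(\tau)$, and both then combine nonnegativity of the single-row profiles $\widehat{g_b}$ with Fact~\ref{onequiv} and Proposition~\ref{prop:isolating}. The only cosmetic difference is that the paper phrases the key step as disjointness of supports of the two sub-moves rather than as the set inclusion you write, but these are equivalent statements.
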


\begin{proof}
For any such $a$, the move
\begin{eqnarray}
\label{offcenter}
\sum_{b \colon \left| b-a \right|< \mathbf{I} ( \tau ) } g_b, 
\end{eqnarray}
has disjoint support from that of $\textbf{g}$.  The sum of the profile of  (\ref{offcenter}) and the profile of $\textbf{g}$ is therefore upper bounded by the profile of $g$.  By Proposition~\ref{prop:isolating} and Fact~\ref{onequiv}, we have
$\widehat{\textbf{g}} 
( a, a )  \leq  \frac{1}{3}$,
as desired.
\end{proof}

\begin{figure}
\begin{center}
\includegraphics[scale=0.6]{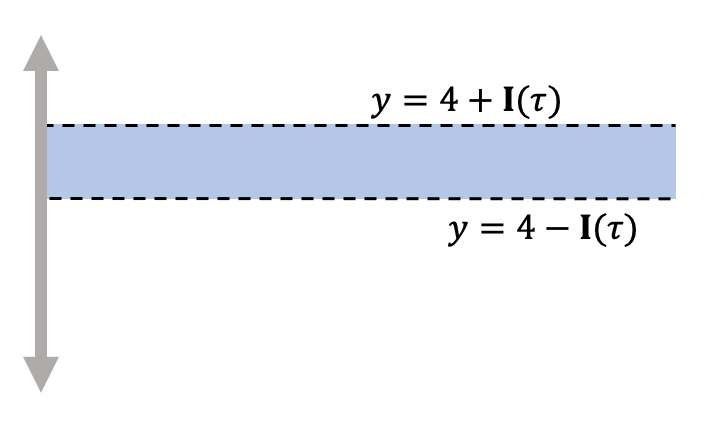}
\end{center}
\caption{The move $\mathbf{g}$ is supported within the shaded blue region.}
\label{fig:gregion}
\end{figure}

\begin{proposition}
\label{prop:1normbound}
The $1$-norm $\left\| g \right\|_1$ of $g$ satisfies
\begin{eqnarray}
\left\| g \right\|_1 & \geq & \exp ( \Omega ( \tau^{-1/2} )).
\end{eqnarray}
\end{proposition}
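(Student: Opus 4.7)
The plan is to apply Proposition~\ref{prop:complexlarge} to the diagonal profile of $\mathbf{g}$. Specifically, consider the function
\begin{eqnarray*}
F(z) & := & \frac{\widehat{\mathbf{g}}(z+4,\, z+4)}{\widehat{\mathbf{g}}(4,4)},
\end{eqnarray*}
viewed as a function of a complex variable $z$. By Proposition~\ref{prop:isolating} together with Fact~\ref{onequiv}, $\widehat{\mathbf{g}}(4,4) \geq 2/3$, so the denominator is bounded below. Using the rational formula $P_x(\alpha) = x(\alpha-1)/(x+\alpha-2)$ valid on $[2,\infty)$ and expanding $\widehat{\mathbf{g}}$, the function $F(z)$ is a rational function in $z$ whose poles all lie at real points of the form $-x-2$ or $-y-2$ for $(x,y) \in \Supp \mathbf{g}$. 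In particular, all poles are real and lie in $(-\infty,-2]$, hence outside $\overline{\mathbb{D}}$.

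Next, I would verify the two hypotheses of Proposition~\ref{prop:complexlarge} with $\nu = 1/2$ and $\delta = 2\mathbf{I}(\tau) \in O(\sqrt{\tau})$. By definition, $F(0)=1$. By Proposition~\ref{prop:isolating2}, whenever $z \in [-1,1]$ satisfies $|z| \geq 2\mathbf{I}(\tau)$, we have $F(z) \leq (1/3)/(2/3) = 1/2$. Proposition~\ref{prop:complexlarge} then produces $\zeta \in \mathbb{S}$ with
\begin{eqnarray*}
|F(\zeta)| & \geq & 2^{\Omega(1/\sqrt{\tau})}.
\end{eqnarray*}

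To finish, I would convert this lower bound into one on $\|\mathbf{g}\|_1$, mirroring the worked example at the end of subsection~\ref{subsec:highly}. Writing
\begin{eqnarray*}
F(\zeta) & = & \frac{1}{\widehat{\mathbf{g}}(4,4)} \sum_{(x,y) \in \Supp \mathbf{g}} \mathbf{g}(x,y)\, P_x(\zeta+4)\, P_y(\zeta+4),
\end{eqnarray*}
I would bound $|P_x(\zeta+4)| = \bigl| x(\zeta+3)/(x+\zeta+2) \bigr| \leq 4x/(x+1) \leq 4$ uniformly in $x \geq 0$ (using $|\zeta+3|\leq 4$ and $|x+\zeta+2| \geq x+1$), and similarly bound $|P_y(\zeta+4)|$ by a universal constant on $\Supp \mathbf{g}$ (where $y$ is restricted to a compact interval around $4$, provided $\tau$ is small enough that $\mathbf{I}(\tau) \leq 1$, say). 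Applying the triangle inequality then gives $|F(\zeta)| \leq O(\|\mathbf{g}\|_1)$, whence $\|\mathbf{g}\|_1 \geq 2^{\Omega(1/\sqrt{\tau})}$. Since $\|g\|_1 \geq \|\mathbf{g}\|_1$ by construction, the desired bound follows.

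The main obstacle I expect is the careful bookkeeping needed to view $F$ as a genuine rational function with the correct pole structure (rather than a piecewise expression), and to uniformly bound $|P_x(\zeta+4)|$ and $|P_y(\zeta+4)|$ on the unit circle. The specific choice to center on $a=4$ in the isolating step of subsection~\ref{subsec:isol} is what allows the shift $a = z+4$ to land the poles outside $\overline{\mathbb{D}}$ and keep the $y$-variable in a bounded range, so all that remains is straightforward estimation. Negligible $\tau$-dependent errors are absorbed into the $\Omega(1/\sqrt{\tau})$ exponent.
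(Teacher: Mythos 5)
Your proof is correct and follows essentially the same route as the paper: restrict attention to the move $\mathbf{g}$ supported in a thin horizontal band around $y=4$, observe (via Propositions~\ref{prop:isolating} and~\ref{prop:isolating2}) that its diagonal profile is concentrated near $a=4$, feed the resulting rational function into Proposition~\ref{prop:complexlarge}, and then bound the resulting complex value by $O(\lVert \mathbf{g} \rVert_1) \leq O(\lVert g \rVert_1)$ using the elementary estimates $\lvert P_x(\zeta+4) \rvert \leq 4$. The only difference is cosmetic: you make the ``appropriate affine transformations'' of the paper explicit by writing $F(z) = \widehat{\mathbf{g}}(z+4,z+4)/\widehat{\mathbf{g}}(4,4)$, whereas the paper works directly with $D(z)=\widehat{\mathbf{g}}(z,z)$ and leaves the shift and normalization implicit; your version also spells out the pole locations $z = -x-2 \leq -2$, which the paper takes for granted.
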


\begin{proof}
Consider the function on $(2, \infty)$ given by
\begin{eqnarray}
z & \mapsto & \widehat{\textbf{g}} ( z, z ).
\end{eqnarray}
This function can be written out explicitly as follows:
\begin{eqnarray}
\label{thediagrational}
z & \mapsto & \sum_{x,y} \mathbf{g} ( x, y ) \cdot \frac{ xz - x}{x + z - 2 } 
\cdot \frac{ yz - y}{y + z - 2 } 
\end{eqnarray}
Let $D \colon \mathbb{C} \cup \left\{ \infty \right\} \to \mathbb{C} \cup \left\{ \infty \right\}$ denote the complex-valued rational function given by expression (\ref{thediagrational}).
By Propositions~\ref{prop:isolating} and \ref{prop:isolating2},
the function $D$ satisfies $D(4) \geq 2/3$, and $0 \leq D ( z ) \leq 1/3$ for all $z$ in the set
\begin{eqnarray}
[3,5] \smallsetminus (4- 2 \mathbf{I} ( \tau ) , 4 +2 \mathbf{I} ( \tau )).
\end{eqnarray} 
Applying Proposition~\ref{prop:complexlarge} (with appropriate affine transformations), we find that there exists a unit-length complex number $\zeta$ such that
\begin{eqnarray}
\left| D ( 4 +  \zeta) \right| & \geq & \exp ( \Omega ( 1/\mathbf{I} ( \tau ) ) \\
& \geq & \exp ( \Omega ( \tau^{-1/2} )). \label{penultimate}
\end{eqnarray}

Next we obtain an upper bound on $\left| D ( 4 + \zeta ) \right|$ in terms of $\left\| g \right\|_1$.  We have 
\begin{eqnarray}
\left| D ( 4 + \zeta ) \right| & \leq & \sum_{x,y} \left| \mathbf{g} ( x, y ) \right| \cdot \left| \frac{ x(4 + \zeta) - x}{x + (4 + \zeta) - 2 } \right|
\cdot \left| \frac{ y (4 + \zeta) - y}{y + (4 + \zeta) - 2 } \right|.
\end{eqnarray}
It is easy to see that the second and third factors in the summand above are each no more than $4$.  Therefore,
\begin{eqnarray}
\left| D ( 4 + \zeta ) \right| & \leq & 16 \left\| \mathbf{g } \right\|_1 \\
& \leq & 16 \left\| g \right\|_1.
\end{eqnarray}
Combining the above bound with inequality (\ref{penultimate}) above yields the desired result.
\end{proof}

\section{Main result}

\label{sec:lp4}

We can now tie together the results of section~\ref{sec:coin} and section~\ref{sec:commbound} to achieve our main result.

\begin{proposition}
\label{prop:epssec}
Let $\epsilon \in (0, \frac{1}{2})$, and suppose that $M = (m_1, m_2)$ is a valid time-independent point game such that
\begin{eqnarray}
m_1 + m_2 & = & \Bigl\llbracket \frac{1}{2} + \epsilon, \frac{1}{2} + \epsilon \Bigr\rrbracket - \frac{1}{2} \cdot \llbracket 1,0 \rrbracket - \frac{1}{2} \cdot \llbracket 0,1 \rrbracket.
\end{eqnarray}
Then,
\begin{eqnarray}
\left\| m_1 \right\|_1 + \left\| m_2 \right\|_1 & \geq & \exp ( \Omega ( \epsilon^{-1/2} )).
\end{eqnarray}
\end{proposition}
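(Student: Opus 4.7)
The plan is to reduce the statement to Proposition~\ref{prop:1normbound}, which already does the hard work for the family of moves $t_\tau$. There are two bridges to cross: first, convert the TIPG $(m_1,m_2)$ for $v_\epsilon$ into a single horizontally valid move whose symmetrization equals $v_\epsilon$; second, pass from $v_\epsilon$ to $t_\tau$ by the rescaling recorded in subsection~\ref{subsec:targ}.

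First I would symmetrize. Since $v_\epsilon = \llbracket \tfrac12+\epsilon,\tfrac12+\epsilon\rrbracket-\tfrac12\llbracket 1,0\rrbracket-\tfrac12\llbracket 0,1\rrbracket$ is a symmetric move, the remark at the end of section~\ref{sec:coin} supplies a symmetric valid TIPG $R'=(r_1',r_2')$ with $r_1'+r_2'=v_\epsilon$ and $r_2'=(r_1')^\top$, defined by $r_1':=(m_1+m_2^\top)/2$. By the triangle inequality,
\begin{eqnarray}
\|r_1'\|_1 & \leq & \tfrac12\bigl(\|m_1\|_1+\|m_2\|_1\bigr),
\end{eqnarray}
and $r_1'$ is horizontally valid (it is the average of the horizontally valid move $m_1$ with the transpose of the vertically valid move $m_2$). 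So $r_1'$ is a horizontally valid move with $r_1'+(r_1')^\top=v_\epsilon$ whose $1$-norm is at most half the target quantity.

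Next I would rescale $v_\epsilon$ to $t_\tau$. Set $\tau=4\epsilon/(1+2\epsilon)$, so that $\tau=\Theta(\epsilon)$ and the identity
\begin{eqnarray}
v_\epsilon(x,y) & = & \tfrac12\, t_\tau\!\bigl((2-\tau)x,(2-\tau)y\bigr)
\end{eqnarray}
from subsection~\ref{subsec:targ} holds. Define $g(x,y):=2\,r_1'\bigl(x/(2-\tau),\,y/(2-\tau)\bigr)$. A direct calculation using the displayed identity gives $g+g^\top=t_\tau$, and Proposition~\ref{prop:stretch} applied row-by-row to $r_1'$ shows that $g$ is horizontally valid (the horizontal rescaling $x\mapsto x/(2-\tau)$ preserves one-dimensional validity, and the vertical rescaling merely permutes the rows and multiplies by the constant $2$). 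Moreover $\|g\|_1=2\|r_1'\|_1\leq \|m_1\|_1+\|m_2\|_1$.

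Finally I would apply Proposition~\ref{prop:1normbound} to $g$ and $\tau$, giving
\begin{eqnarray}
\|m_1\|_1+\|m_2\|_1 \;\geq\; \|g\|_1 \;\geq\; \exp\!\bigl(\Omega(\tau^{-1/2})\bigr) \;=\; \exp\!\bigl(\Omega(\epsilon^{-1/2})\bigr),
\end{eqnarray}
where the last equality uses $\tau=\Theta(\epsilon)$ for $\epsilon\in(0,\tfrac12)$. The substantive work has already been absorbed into Proposition~\ref{prop:1normbound}; the only risk in this step is bookkeeping, namely checking that the symmetrization and rescaling preserve horizontal validity and do not inflate the $1$-norm — both are routine. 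The main obstacle, which is now behind us, was establishing the rational-function concentration bound of section~\ref{sec:highly} that underlies Proposition~\ref{prop:1normbound}.
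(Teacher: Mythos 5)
Your proof is correct and takes essentially the same approach as the paper: symmetrization via the remark at the end of section~\ref{sec:coin}, rescaling from $v_\epsilon$ to $t_\tau$ with $\tau = 4\epsilon/(1+2\epsilon)$, and reduction to Proposition~\ref{prop:1normbound}. The only cosmetic difference is that you symmetrize before rescaling whereas the paper rescales first; the resulting move $g$ is identical to the paper's $m'$, and the $1$-norm bookkeeping agrees.
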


\begin{proof}
Define moves $m'_1, m'_2$ by
\begin{eqnarray}
m'_i(x,y) & = & 2 m_i \left( x \left( \frac{1}{2} + \epsilon \right), y
\left( \frac{1}{2} + \epsilon \right) \right).
\end{eqnarray}
Then, $(m'_1, m'_2)$ is a valid TIPG which achieves the move $2 \llbracket 1, 1 \rrbracket - \llbracket 2-\tau, 0 \rrbracket - \llbracket 0, 2 - \tau \rrbracket$, where $\tau = 4 \epsilon/(1+2 \epsilon)$.  Moreover, if we let $m' = [m'_1+(m'_2)^\top]/2$, then $(m', (m')^\top)$ is a symmetric valid TIPG that achieves the same move.  By Proposition~\ref{prop:1normbound},
\begin{eqnarray}
\left\| m' \right\|_1 & \geq & \exp ( \Omega ( \tau^{-1/2} )).
\end{eqnarray}
It is obvious that $\left\| m_1 \right\|_1 + \left\| m_2 \right\|_1 \geq \left\| m' \right\|_1$ and that $\tau \leq O ( \epsilon)$.  This completes the proof. 
\end{proof}

\begin{theorem}
\label{thm:final}
Suppose that $\mathbf{C}$ is an $n$-round weak coin-flipping protocol with cheating probabilities $P_A^* \leq \frac{1}{2} + \epsilon$ and $P_B^* \leq \frac{1}{2} + \epsilon$.  Then, 
\begin{eqnarray}
\label{thefinalinequality}
n & \geq & \exp ( \Omega ( \epsilon^{-1/2} )).
\end{eqnarray}
\end{theorem}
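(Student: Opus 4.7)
The plan is to combine Theorem~\ref{thm:wcf2tpg} with Proposition~\ref{prop:epssec}, modulo a small technical adjustment to reconcile their statements. Fix an auxiliary parameter $\delta \in (0, \epsilon]$. By Theorem~\ref{thm:wcf2tpg}, the protocol $\mathbf{C}$ yields a valid TIPG $R = (r_1, r_2)$ satisfying
\begin{eqnarray*}
r_1 + r_2 \;=\; \llbracket P_A^* + \delta,\, P_B^* + \delta \rrbracket \;-\; \tfrac{1}{2}\llbracket 1, 0 \rrbracket \;-\; \tfrac{1}{2}\llbracket 0, 1 \rrbracket
\end{eqnarray*}
and $\|r_1\|_1 + \|r_2\|_1 \leq 2n$. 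If we could feed $R$ directly into Proposition~\ref{prop:epssec} we would be done; the only obstruction is that Proposition~\ref{prop:epssec} requires a symmetric target point of the form $\llbracket \tfrac{1}{2} + \epsilon', \tfrac{1}{2} + \epsilon' \rrbracket$ on the diagonal, whereas $P_A^*$ and $P_B^*$ are in general unequal.

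To close this gap I will augment $R$ with two cheap ``point-translation'' moves that push the final configuration onto the diagonal. Writing $\alpha = P_A^* + \delta$, $\beta = P_B^* + \delta$, and $\gamma := \tfrac{1}{2} + \epsilon + \delta \geq \max(\alpha, \beta)$, define
\begin{eqnarray*}
q_v & := & \llbracket \alpha, \gamma \rrbracket - \llbracket \alpha, \beta \rrbracket, \\
q_h & := & \llbracket \gamma, \gamma \rrbracket - \llbracket \alpha, \gamma \rrbracket.
\end{eqnarray*}
A direct check against Definition~\ref{def:valid} shows that $q_v$ is vertically valid (its single nonzero column moves a unit of mass from height $\beta$ upward to height $\gamma \geq \beta$, which one checks preserves all three conditions defining a valid one-dimensional move), and symmetrically $q_h$ is horizontally valid. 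Consequently $(r_1 + q_h,\, r_2 + q_v)$ is again a valid TIPG; its sum is $\llbracket \gamma, \gamma \rrbracket - \tfrac{1}{2}\llbracket 1, 0 \rrbracket - \tfrac{1}{2}\llbracket 0, 1 \rrbracket$ and its combined $1$-norm is at most $2n + 4$.

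Applying Proposition~\ref{prop:epssec} with $\epsilon' := \epsilon + \delta$ now yields $2n + 4 \geq \exp(\Omega((\epsilon + \delta)^{-1/2}))$. Choosing $\delta = \epsilon$ gives $\epsilon + \delta = 2\epsilon$, so the right-hand side is $\exp(\Omega(\epsilon^{-1/2}))$ (the $\sqrt{2}$ is absorbed into the $\Omega$), and the additive constant and factor of $2$ on the left are absorbed as well, yielding the claimed lower bound on $n$. For the uninteresting regime where $\epsilon$ is not small (e.g.\ $\epsilon \geq 1/2$), the asserted bound reduces to $n = \Omega(1)$ and is vacuous. I do not anticipate a serious obstacle in this section, since essentially all of the analytic and combinatorial depth has already been invested in Proposition~\ref{prop:epssec}; the only genuinely new ingredient is the elementary observation that an asymmetric TIPG endpoint may be pushed upward toward the diagonal by adding $q_v$ and $q_h$, at additive cost $O(1)$ in the $1$-norm.
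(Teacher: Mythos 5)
Your proof is correct, and it follows the same overall strategy as the paper (plug Theorem~\ref{thm:wcf2tpg} into Proposition~\ref{prop:epssec}), but it is in one respect more careful. The paper's proof simply says ``Combining Proposition~\ref{prop:epssec} with Theorem~\ref{thm:wcf2tpg}...'' and then sends $\delta \to 0$; read literally, this glosses over the fact that Theorem~\ref{thm:wcf2tpg} produces a TIPG whose endpoint $\llbracket P_A^* + \delta, P_B^* + \delta\rrbracket$ need not lie on the diagonal, whereas Proposition~\ref{prop:epssec} is stated only for symmetric endpoints of the form $\llbracket \tfrac12 + \epsilon', \tfrac12 + \epsilon'\rrbracket$. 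Your two point-raise moves $q_v$ and $q_h$ close exactly this gap, and your validity check for them against Definition~\ref{def:valid} is right: the column of $q_v$ at $x = \alpha$ moves unit mass from height $\beta$ to height $\gamma \geq \beta$, and each of the three defining inequalities follows from the monotonicity of $y \mapsto y/(y+\lambda)$, and symmetrically for $q_h$. The $1$-norm cost of $4$ is indeed absorbed by the $\Omega$. The only stylistic divergence from the paper is that you fix $\delta = \epsilon$ rather than letting $\delta \to 0$; both give the stated bound, the paper's version avoiding the factor-of-$\sqrt{2}$ bookkeeping and yours avoiding an appeal to a limiting argument. In short: same route, with a small but genuine rigor improvement on the symmetrization of the endpoint.
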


\begin{proof}
Combining Proposition~\ref{prop:epssec} with  Theorem~\ref{thm:wcf2tpg}, we find that for any $\delta > 0$,
\begin{eqnarray}
n & \geq & \exp ( \Omega ((\epsilon + \delta)^{-1/2 })).
\end{eqnarray}
Since the above inequality is true for any positive real number $\delta$, the desired result follows.
\end{proof}

%\section{Numerical results}

\section{Further directions}

\label{sec:dir}

A natural next step would be to compute an explicit function which would serve as a lower bound for $n$ in  Theorem~\ref{thm:final}.  This is a matter of tracing through the steps of the proof, and should not be difficult.  Explicit bounds on $n$ will open the door to searching for quantum weak coin-flipping protocols that are optimized
for the number of communication rounds (at a particular bias $\epsilon$). 

One can also try to lower bound the amount of quantum memory needed to achieve weak coin-flipping for a given bias.
As discussed in~\cite{Aharonov:2016}, the quantum memory used by a protocol is related to the size of the support of its point games.  Some of the same techniques used in this paper might be applicable to proving lower bounds on quantum memory size.

Can a related impossibility result be proved for strong quantum coin-flipping?  A.~Kitaev showed that any strong coin-flipping protocol must have bias at least $\frac{\sqrt{2}}{2} - \frac{1}{2} \approx 0.207$.  Meanwhile, Chailloux and Kerenidis~\cite{Chailloux:2017} proved, by building on Mochon's work on quantum weak coin flipping with vanishing bias~\cite{Mochon:2007}, that strong coin-flipping is possible with bias arbitrarily close to $\frac{\sqrt{2}}{2} - \frac{1}{2}$.  One could try to prove that strong coin-flipping with bias approaching $\frac{\sqrt{2}}{2} - \frac{1}{2}$ requires a large amount of communication.

Lastly, I will note that although we have found that the moves (\ref{taumove}) that define quantum coin-flipping are exponentially hard to achieve by valid point games, my experience so far suggests this is a uniquely difficult family of moves.   It may be worth exploring whether there are other simple classes of moves that can be more easily achieved by valid point games, and exploring whether such classes could have applications to positive results in two-party cryptography.

\bibliographystyle{plain}

\bibliography{FullPaper}

\appendix

\section{Appendices}

\subsection{Proof of Proposition~\ref{prop:conc}}

\label{app:logbound}

We state some elementary facts.
\begin{fact}
\label{fact:zeroes}
If $f$ is an analytic function on an open neighborhood of a compact set $R \subseteq \mathbb{C}$, and $f$ is not the zero function, then $f$ has only a finite number of zeroes in $R$.
\end{fact}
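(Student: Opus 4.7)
The plan is to argue by contradiction, combining compactness of $R$ with the identity theorem for analytic functions. Suppose $f$ has infinitely many zeros in $R$. Since $R$ is compact, the Bolzano--Weierstrass theorem yields a sequence of distinct zeros $z_n$ converging to some limit $p \in R$. By continuity $f(p) = 0$, and because $p$ lies in the open neighborhood $U$ on which $f$ is analytic, $f$ admits a convergent power series expansion $f(z) = \sum_{k \geq 0} a_k (z - p)^k$ on some disc $\mathbb{D}(p, \rho) \subseteq U$.

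I would next show that every $a_k$ vanishes, so that $f \equiv 0$ on $\mathbb{D}(p, \rho)$. If some $a_k$ were nonzero, let $N$ be the smallest such index; writing $f(z) = (z - p)^N h(z)$ with $h$ analytic and $h(p) = a_N \neq 0$, continuity of $h$ would give a punctured neighborhood of $p$ on which $f$ is nowhere zero, contradicting the accumulation $z_n \to p$ of zeros of $f$. Hence $f$ vanishes identically on an open disc around $p$.

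Finally, a clopen argument inside the connected component of $U$ containing $p$ propagates this: the set of points at which $f$ vanishes identically on some open neighborhood is open by definition, and is closed inside $U$ because any limit point of it is itself a zero of infinite order by the previous Taylor-series argument. Hence $f$ is identically zero on that entire component, contradicting the hypothesis that $f$ is not the zero function. The whole argument is routine and I do not anticipate any real obstacle; the only very mild subtlety is that if $U$ happens to be disconnected, ``not the zero function'' should be read as $f$ not being identically zero on the component meeting $R$, which is the natural reading in applications such as Proposition~\ref{prop:conc}.
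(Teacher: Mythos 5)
The paper does not supply a proof of Fact~\ref{fact:zeroes}; it is stated as one of several ``elementary facts'' in the appendix and invoked without argument in the proof of Proposition~\ref{prop:conc}. Your proof is the standard one and is correct: Bolzano--Weierstrass on the compact set $R$ gives an accumulation point $p$ of zeros, the Taylor-coefficient argument at $p$ (the local form of the identity theorem) forces $f\equiv 0$ on a disc about $p$, and the open-and-closed argument then propagates this to the whole connected component of the neighborhood $U$ containing $p$. Your closing caveat is also well taken: as literally stated the fact is slightly imprecise when $U$ is disconnected, since $f$ could vanish identically on one component meeting $R$ while being nonzero elsewhere on $U$; the statement should be read with ``not the zero function'' meaning not identically zero on any component of $U$ that meets $R$. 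This is harmless in the paper, because in the sole application (choosing a circle $\mathbb{S}(0,1-\zeta)$ avoiding the zeros of $f$) the relevant compact set is an annulus inside the connected domain $\mathbb{D}$, and $f$ has already been assumed not identically zero there.
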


\begin{fact}
\label{fact:unif}
If $f \colon \overline{\mathbb{D}} \to \mathbb{C}$ is a continuous function, then the family of functions on $\mathbb{S}$ given by
\begin{eqnarray}
z \mapsto f ( cz ),
\end{eqnarray}
for $c \in [0,1]$, converges uniformly to $f_{\mid \mathbb{S}}$ as $c \to 1$.
\end{fact}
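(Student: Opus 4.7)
The plan is to derive the uniform convergence directly from the uniform continuity of $f$ on the compact set $\overline{\mathbb{D}}$. Since $\overline{\mathbb{D}}$ is a closed and bounded subset of $\mathbb{C}$, it is compact, and any continuous function on a compact metric space is uniformly continuous. So first I would invoke this standard fact to produce, for each $\varepsilon > 0$, a $\delta > 0$ such that $|f(w) - f(z)| < \varepsilon$ whenever $w, z \in \overline{\mathbb{D}}$ satisfy $|w - z| < \delta$.

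Next I would exploit the fact that points on the unit circle have modulus exactly one. For $z \in \mathbb{S}$ and $c \in [0,1]$, we have $|cz - z| = |z| \cdot |c-1| = 1 - c$, which does not depend on $z$. Consequently, if $c$ is chosen so that $1 - c < \delta$, then $|cz - z| < \delta$ simultaneously for every $z \in \mathbb{S}$, and since both $cz$ and $z$ lie in $\overline{\mathbb{D}}$, the uniform continuity bound gives $|f(cz) - f(z)| < \varepsilon$ for all $z \in \mathbb{S}$. This is exactly the statement that
\begin{eqnarray}
\sup_{z \in \mathbb{S}} |f(cz) - f(z)| & \leq & \varepsilon
\end{eqnarray}
whenever $c \in (1-\delta, 1]$, which is the definition of uniform convergence of the family $z \mapsto f(cz)$ to $f_{\mid \mathbb{S}}$ as $c \to 1$.

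The proof is essentially a one-step application of uniform continuity, so there is no real obstacle; the only mild subtlety is noticing that because all points in $\mathbb{S}$ share the same modulus, the displacement $|cz - z|$ is controlled by a single quantity $1-c$ independent of $z$, which is what makes the convergence uniform rather than merely pointwise. No complex analysis is needed beyond the topological fact that $\overline{\mathbb{D}}$ is compact.
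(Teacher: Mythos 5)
Your proof is correct and is exactly the intended argument; the paper states this as an elementary fact without proof, and the uniform-continuity-on-a-compact-set argument, together with the observation that $|cz - z| = 1 - c$ is independent of $z \in \mathbb{S}$, is the standard way to justify it.
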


We next prove modified versions of Proposition~\ref{prop:concprelim}.

\begin{proposition}
\label{prop:concprelim2}
Let $f$ be an analytic function on an open neighborhood of $\overline{\mathbb{D}}$ such that $f$ has no zeroes on the unit circle $\mathbb{S}$.  Then,
\begin{eqnarray}
\log \left| f ( 0 ) \right| & \leq & \frac{1}{2 \pi} \int_{0}^{2 \pi} \log \left| f ( e^{i t} ) \right| dt.
\end{eqnarray}
\end{proposition}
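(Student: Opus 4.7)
The plan is to prove Proposition~\ref{prop:concprelim2} by reducing to Proposition~\ref{prop:concprelim} via a Blaschke product, which is the standard route to Jensen's formula. First I would dispose of the degenerate cases. If $f$ is identically zero, or more generally if $f(0) = 0$, then $\log |f(0)| = -\infty$ and the claimed inequality is immediate. So from here on assume $f \not\equiv 0$ and $f(0) \neq 0$.

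Next, by Fact~\ref{fact:zeroes}, $f$ has only finitely many zeroes in $\overline{\mathbb{D}}$; call them $z_1, \ldots, z_n \in \mathbb{D}$ (listed with multiplicity). By hypothesis none lie on $\mathbb{S}$, and since $f(0) \neq 0$ none of them is zero either. Form the finite Blaschke product
\begin{eqnarray}
B(z) & := & \prod_{k=1}^n \frac{z - z_k}{1 - \bar{z}_k z}.
\end{eqnarray}
The poles of $B$ are at the points $1/\bar{z}_k$, all of which satisfy $|1/\bar{z}_k| > 1$, so $B$ is analytic on a neighborhood of $\overline{\mathbb{D}}$. Its zeroes in $\overline{\mathbb{D}}$ are exactly $z_1, \ldots, z_n$ with matching multiplicities, so $g := f/B$ extends to an analytic function on a neighborhood of $\overline{\mathbb{D}}$, and $g$ has no zeroes in $\overline{\mathbb{D}}$ at all.

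Now I would apply Proposition~\ref{prop:concprelim} to $g$. A direct computation shows $|B(e^{it})| = 1$ for every $t$, so $|g(e^{it})| = |f(e^{it})|$ on $\mathbb{S}$. On the other hand, $B(0) = \prod_k (-z_k)$, whence $|B(0)| = \prod_k |z_k| < 1$, which yields $|g(0)| = |f(0)|/|B(0)| \geq |f(0)|$. Combining these with Proposition~\ref{prop:concprelim} gives
\begin{eqnarray}
\log |f(0)| & \leq & \log |g(0)| \;=\; \frac{1}{2\pi}\int_0^{2\pi} \log |g(e^{it})|\, dt \;=\; \frac{1}{2\pi}\int_0^{2\pi} \log |f(e^{it})|\, dt,
\end{eqnarray}
which is exactly the claim.

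This proof is essentially routine once the Blaschke product is set up correctly; there is no real obstacle, only the usual bookkeeping of making sure the zeroes of $B$ cancel the zeroes of $f$ with the right multiplicities and that the poles of $B$ stay outside $\overline{\mathbb{D}}$ (which is where the hypothesis that $f$ has no zeroes on $\mathbb{S}$ matters — it guarantees $|z_k| < 1$ strictly, so the poles of $B$ sit strictly outside $\overline{\mathbb{D}}$ and $g$ is analytic in an open neighborhood, as Proposition~\ref{prop:concprelim} requires).
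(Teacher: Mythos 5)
Your proof is correct, and it follows the same overall template as the paper's — isolate the finitely many zeroes in $\mathbb{D}$ via Fact~\ref{fact:zeroes}, factor $f$ into a piece carrying those zeroes and a zero-free piece, and apply Proposition~\ref{prop:concprelim} to the latter — but the factor you choose is different, and this is worth noting. The paper writes $f = f_1 f_2$ with $f_1(z) = \prod_k (z - z_k)$ a plain polynomial; it then applies Proposition~\ref{prop:concprelim} to $f_2$ and separately shows by ``direct computation'' that $\log|f_1(0)| \le \frac{1}{2\pi}\int_0^{2\pi}\log|f_1(e^{it})|\,dt$, which amounts to the classical fact that $\frac{1}{2\pi}\int_0^{2\pi}\log|e^{it}-a|\,dt = 0$ for $|a|<1$. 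You instead divide by the Blaschke product $B(z) = \prod_k \frac{z - z_k}{1 - \bar z_k z}$. This buys you something concrete: since $|B| \equiv 1$ on $\mathbb{S}$, the boundary integrals of $\log|g|$ and $\log|f|$ coincide exactly, and the entire inequality is produced by the single observation $|B(0)| = \prod_k |z_k| \le 1$, so no separate integral computation is needed. In exchange you must argue that the poles of $B$ lie outside $\overline{\mathbb{D}}$ (which you do, using the hypothesis that no $z_k$ lies on $\mathbb{S}$). Both are standard routes to Jensen's inequality; yours is a bit tidier because it offloads the boundary estimate onto the unimodularity of $B$. One small cosmetic point: when $f$ has no zeroes in $\mathbb{D}$ the product defining $B$ is empty and $|B(0)| = 1$, so your ``$|B(0)| < 1$'' should read ``$|B(0)| \le 1$''; this doesn't affect the conclusion since you only use the inequality $\log|f(0)| \le \log|g(0)|$.
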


\begin{proof}
The function $f ( z )$ can be written as $f (z ) = f_1 ( z ) f_2 ( z )$, where $f_1$ is a polynomial with roots in $\mathbb{D}$, and $f_2$ has no zeroes on $\overline{\mathbb{D}}$.  By Proposition~\ref{prop:concprelim},
\begin{eqnarray}
\label{f2eq}
\log \left| f_2 ( 0 ) \right| & = & \frac{1}{2 \pi} \int_{0}^{2 \pi} \log \left| f_2 ( e^{i t} ) \right| dt,
\end{eqnarray}
while direct computation shows that
\begin{eqnarray}
\label{f1ineq}
\log \left| f_1 ( 0 ) \right| & \leq & \frac{1}{2 \pi} \int_{0}^{2 \pi} \log \left| f_1 ( e^{i t} ) \right| dt.
\end{eqnarray}
Summing equation (\ref{f2eq}) and inequality (\ref{f1ineq}) yields the desired result.
\end{proof}

By affine transformation, we have the following corollary.

\begin{corollary}
\label{cor:concprelim}
Let $f$ be an analytic function on an open neighborhood of a closed disc $\overline{\mathbb{D}}(z,r)$ such that $f$ has no zeroes on $\mathbb{S}(z,r)$.  Then,
\begin{eqnarray}
\log \left| f ( z ) \right| & \leq & \frac{1}{2 \pi} \int_{0}^{2 \pi} \log \left| f ( z + r e^{i t} ) \right| dt. \qed
\end{eqnarray}
\end{corollary}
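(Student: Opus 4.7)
The plan is to reduce this corollary to Proposition~\ref{prop:concprelim2} by a simple change of variables. Define a new function $g \colon \overline{\mathbb{D}} \to \mathbb{C}$ by $g(w) = f(z + rw)$. Since $f$ is analytic on an open neighborhood of $\overline{\mathbb{D}}(z,r)$ and the affine map $w \mapsto z + rw$ sends an open neighborhood of $\overline{\mathbb{D}}$ to an open neighborhood of $\overline{\mathbb{D}}(z,r)$, the composition $g$ is analytic on an open neighborhood of $\overline{\mathbb{D}}$.

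Next I would verify the zero-set hypothesis: a point $w \in \mathbb{S}$ is a zero of $g$ exactly when $z + rw \in \mathbb{S}(z,r)$ is a zero of $f$, and by assumption $f$ has no zeroes on $\mathbb{S}(z,r)$. Thus $g$ has no zeroes on $\mathbb{S}$, so Proposition~\ref{prop:concprelim2} applies and gives
\begin{eqnarray}
\log \left| g(0) \right| & \leq & \frac{1}{2 \pi} \int_0^{2\pi} \log \left| g ( e^{it} ) \right| dt.
\end{eqnarray}

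Finally I would translate the inequality back. By construction $g(0) = f(z)$ and $g(e^{it}) = f(z + r e^{it})$, so substituting yields exactly the claimed inequality. There is no real obstacle here; the only subtlety is making sure the hypotheses of Proposition~\ref{prop:concprelim2} transfer cleanly under the affine rescaling, which they do since analyticity and the location of zeroes are preserved by an invertible affine map of the disc.
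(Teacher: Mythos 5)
Your proposal is correct and is exactly the argument the paper intends: the paper derives this corollary from Proposition~\ref{prop:concprelim2} ``by affine transformation,'' and your substitution $g(w) = f(z + rw)$ simply spells out that transformation, verifying that analyticity and the absence of zeroes on the boundary circle carry over. Nothing is missing.
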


Now we will prove the desired result.

\textit{Proof of Proposition~\ref{prop:conc}.}
If $f$ is the zero function, then Proposition~\ref{prop:conc} is trivial, so we will assume that $f$ is not identically zero.
Take any $\delta > 0$.  By Fact~\ref{fact:unif} above, we can find $\epsilon > 0$ such that the following inequality holds on the annulus $\overline{\mathbb{D}} \smallsetminus \mathbb{D} ( 0, 1 - \epsilon)$:
\begin{eqnarray}
\left| f(z ) - f \left( \frac{z}{ \left| z \right| } \right) \right| & \leq & \delta.
\end{eqnarray}
Choose (using Fact~\ref{fact:zeroes}) a real number $\zeta \in [\epsilon/2,\epsilon]$ such that $f$ has no zeroes on $\mathbb{S} ( 0, 1-\zeta )$. Then, applying Corollary~\ref{cor:concprelim},
\begin{eqnarray}
\log \left| f ( 0 ) \right| & \leq &
\frac{1}{2 \pi} \int_0^{2 \pi} 
\log \left| f ( (1 - \zeta) e^{it} ) \right| dt \\
& \leq &  \frac{1}{2 \pi} \int_0^{2 \pi} \log \left( \left| f ( e^{it} ) \right| + \delta  \right) dt \\
\label{bupperbound}
& \leq &  \frac{1}{2 \pi} \int_0^{2 \pi} \log \left( 2^{b (  e^{it} )} + \delta  \right) dt.
\end{eqnarray}

Since the upper bound (\ref{bupperbound}) holds for any $\delta > 0$, the desired claim follows.  $\qed$

\subsection{Additional diagrams for subsection~\ref{subsec:highly}}
\label{app:adiag}

Below, diagrams are given for the map $G$ (\ref{gdef}) and the map $F$ (\ref{fdef}).
\vskip0.2in
\begin{center}
\fbox{\includegraphics[scale=0.3]{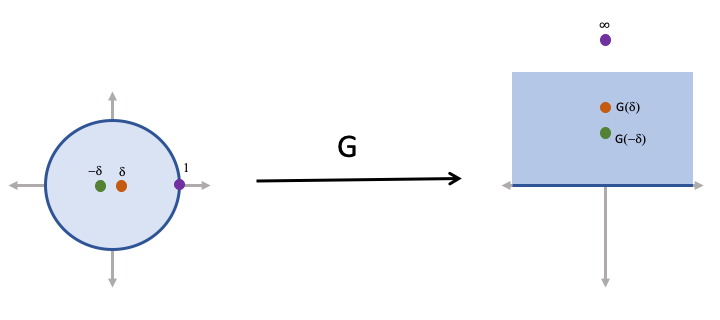}}
\end{center}
\vskip0.2in
\begin{center}
\fbox{\includegraphics[scale=0.3]{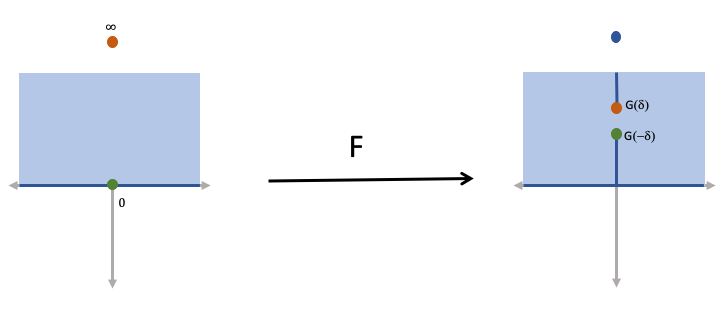}}
\end{center}

\end{document}